\newtheorem{theorem}{Theorem}[section]
\newtheorem{lemma}[theorem]{Lemma}
\newtheorem{meta-theorem}[theorem]{Meta-Theorem}
\newtheorem{corollary}[theorem]{Corollary}
\newtheorem{definition}[theorem]{Definition}
\definecolor{darkgreen}{rgb}{0,0.5,0}
\crefname{theorem}{Theorem}{Theorems}
\Crefname{lemma}{Lemma}{Lemmas}
\Crefname{observation}{Observation}{Observations}
\Crefname{equation}{}{}
\algnewcommand\algorithmicswitch{\textbf{switch}}
\algnewcommand\algorithmiccase{\textbf{case}}
\newcommand{\eps}{\varepsilon}
\newcommand{\poly}{\operatorname{\text{{\rm poly}}}}
\newcommand{\Var}{\text{Var}}
\DeclareMathOperator{\E}{\mathbb{E}}
\renewcommand{\paragraph}[1]{\vspace{0.15cm}\noindent {\bf #1}:}
\DeclarePairedDelimiter{\abs}{\lvert}{\rvert}
\let\oldabs\abs
\def\abs{\@ifstar{\oldabs}{\oldabs*}}
\newcommand{\FullOrShort}{full}
  \newcommand{\fullOnly}[1]{#1}
  \newcommand{\shortOnly}[1]{}
    \newcommand{\fullOnly}[1]{}
    \newcommand{\IncludePictures}[1]{}
\begin{document}

\date{}

\title{Tight Analysis of Parallel Randomized Greedy MIS}

\author{
	 Manuela Fischer\\
  \small ETH Zurich \\
  \small manuela.fischer@inf.ethz.ch
		\and
 Andreas Noever\\
  \small ETH Zurich\\
  \small anoever@inf.ethz.ch
 }

\maketitle

\setcounter{page}{0}
\thispagestyle{empty}

\begin{abstract}
We provide a tight analysis which settles the round complexity of the well-studied \emph{parallel randomized greedy MIS} algorithm, thus answering the main open question of Blelloch, Fineman, and Shun [SPAA'12]. 

The parallel/distributed randomized greedy Maximal Independent Set (MIS) algorithm works as follows. An order of the vertices is chosen uniformly at random. Then, in each round, all vertices that appear before their neighbors in the order are added to the independent set and removed from the graph along with their neighbors. The main question of interest is the number of rounds it takes until the graph is empty. This algorithm has been studied since 1987, initiated by Coppersmith, Raghavan, and Tompa [FOCS'87], and the previously best known bounds were $O(\log n)$ rounds in expectation for Erd\H{o}s-R\'{e}nyi random graphs by Calkin and Frieze [Random Struc. \& Alg. '90] and $O(\log^2 n)$ rounds with high probability for general graphs by Blelloch, Fineman, and Shun [SPAA'12].

We prove a high probability upper bound of $O(\log n)$ on the round complexity of this algorithm in general graphs, and that this bound is tight. This also shows that parallel randomized greedy MIS is as fast as the celebrated algorithm of Luby [STOC'85, JALG'86].
\end{abstract}

\newpage

\section{Introduction and Related Work}
The Maximal Independent Set (MIS) problem plays a central role in parallel and distributed computing \cite{valiant1983parallel,cook1983overview}, and has---due to its many applications in symmetry breaking \cite{luby1985simple}---been extensively studied for more than three decades \cite{karp1985fast,luby1985simple,alon1986fast,goldberg1986parallel,goldberg1987parallel,coppersmith1989parallei,goldberg1989constructing,goldberg1989new,calkin1990probabilistic,linial1992locality,blelloch2012greedy,barenboim2012locality,Ghaffari-MIS}. 
We refer to \cite{DBLP:journals/jacm/BarenboimEPS16} for a thorough review of the state of the art. 

One strikingly simple algorithm addressing this problem is the \emph{parallel/distributed randomized greedy MIS} algorithm, which works as follows. An order of the vertices is chosen uniformly at random. Then, in each round, all \emph{local minima}---i.e., all vertices that appear before their neighbors in the order---are added to the independent set and removed from the graph along with their neighbors. The main question of interest is the number of rounds it takes until the graph is empty.

This algorithm is particularly easy to implement and requires only a small amount of communication. Indeed, a vertex only needs to inform its neighbors about its position in the random order and then, in the round of its removal from the graph, about its decision (whether to join the MIS). For practical implementation-related details, we refer to \cite{blelloch2012greedy}. 

Another nice property of this algorithm is that---once an order is fixed---it always yields the so-called \emph{lexicographically first MIS}, i.e., the same as the \emph{sequential greedy MIS} algorithm that goes through the vertices in this order one by one and adds a vertex to the MIS if none of its neighbors has been added in a previous step. Such determinism can be an important feature of parallel algorithms \cite{bocchino2009parallel,blelloch2012internally}. 

These practical advantages are mainly owed to the fact that the same random order is used throughout. This, however, comes with the drawback of complicating the analysis significantly, due to the lack of independence among different rounds. Indeed, while for \emph{Luby's algorithm}---which is the same algorithm but with regenerated random order in each iteration---the round complexity was established at $O(\log n)$ for general $n$-node graphs more than 30 years ago \cite{luby1985simple,alon1986fast}, no similar result is known for parallel randomized greedy MIS.   
For Erd\H{o}s-R\'enyi random graphs, Calkin and Frieze \cite{calkin1990probabilistic} could prove an (in expectation) upper bound of $O(\log n)$, resolving the conjecture of Coppersmith, Raghavan, and Tompa \cite{coppersmith1989parallei} who themselves arrived at $O\left(\frac{\log^2 n}{ \log \log n}\right)$. A matching lower bound of $\Omega(\log n)$ was proven by Calkin, Frieze and Ku\v{c}era\cite{calkin1992probabilisticlower} two years later.

For general graphs, Blelloch, Fineman, and Shun \cite{blelloch2012greedy} proved 5 years ago that w.h.p.\footnote{As standard, \emph{with high probability}, abbreviated as w.h.p., indicates a probability at least $1-\frac{1}{n^c}$, for any desirably large constant $c\geq 2$.} $O(\log^2 n)$ rounds are enough.
The authors stated as one of their main open questions  
\begin{center}
\begin{minipage}{0.95\linewidth}
\vspace{-10pt}
\begin{mdframed}[hidealllines=true, backgroundcolor=gray!00]
\center{
\emph{``whether the dependence length [...] can be improved to $O(\log n)$",}
}
\vspace{-10pt}
\end{mdframed}
\end{minipage}
\end{center}
thus whether the analysis of parallel greedy MIS's round complexity can be improved to $O(\log n)$. 
\subsection{Our Results} 
We show that the parallel greedy MIS's round complexity is indeed in $O(\log n)$. This in particular also shows that parallel randomized greedy MIS is as fast as the celebrated algorithm of Luby, confirming a widespread belief.
 
\begin{theorem}\label{thm:MIS}
The parallel/distributed randomized greedy MIS algorithm terminates in $O(\log n)$ rounds on any $n$-node graph with high probability.  
\end{theorem}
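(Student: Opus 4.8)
The plan is to reduce the number of rounds to a purely combinatorial ``witness'' inside the random priority order, and then bound the probability that a long witness exists. Throughout, fix the random permutation, write $\pi(v)$ for $v$'s priority, let $\Delta$ be the maximum degree of $G$, let $M$ be the resulting MIS (equivalently, the output of sequential greedy in increasing $\pi$-order, so $v\in M$ iff $v$ has no neighbour $u\sim v$ with $\pi(u)<\pi(v)$ in $M$), and let $r(v)$ be the round in which $v$ is removed.

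\textbf{Step 1 (a recursion for the round number).} One checks directly that for $v\in M$,
\[
  r(v)\;=\;1+\max\bigl\{\,r(u)\;:\;u\sim v,\ \pi(u)<\pi(v)\,\bigr\}
\]
(with $\max\emptyset=0$): once all lower-priority neighbours of $v$ have been removed, $v$ is a local minimum and joins $M$, and it cannot join earlier since a surviving lower-priority neighbour blocks it. For $v\notin M$, let $\kappa(v)$ be the \emph{lowest}-priority neighbour of $v$ that lies in $M$ (it exists and satisfies $\pi(\kappa(v))<\pi(v)$); then $r(v)\le r(\kappa(v))$. Consequently, if some vertex has round $\ge t$, then there is $y_1\in M$ with $r(y_1)\ge t$, and iterating — at $y_k$ take a maximiser $u_k$ in the displayed recursion (it is a lower-priority neighbour of $y_k\in M$, hence $u_k\notin M$), then jump to $y_{k+1}:=\kappa(u_k)$ — produces a \emph{witness}: vertices $y_1,\dots,y_t\in M$ and $u_1,\dots,u_{t-1}\notin M$ with $y_k\sim u_k\sim y_{k+1}$, $y_{k+1}=\kappa(u_k)$, and strictly decreasing priorities $\pi(y_1)>\pi(u_1)>\pi(y_2)>\pi(u_2)>\cdots>\pi(y_t)$, so that these $2t-1$ vertices are distinct. (This forced two-step structure through vertices outside $M$ is exactly what lets us avoid the naive ``longest decreasing path in $G$'' bound, which is useless for dense graphs.)

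\textbf{Step 2 (first moment over witnesses).} It now suffices to show $\Pr[\text{a length-}t\text{ witness exists}]\le n^{-\Omega(1)}$ for $t=C\log n$. Reveal the priorities in increasing order (i.e.\ run sequential greedy) and explore candidate witnesses level by level: $y_1$ ranges over the $n$ vertices, and given $y_k$ and the revealed history, the intermediary $u_k$ ranges over the at most $\deg(y_k)\le\Delta$ lower-priority neighbours of $y_k$, after which $y_{k+1}=\kappa(u_k)$ is determined. Using only the forced decreasing order of the $2t-1$ priorities, each length-$t$ branch is realised with probability at most $1/(2t-1)!$, giving the crude bound $n\,\Delta^{\,t-1}/(2t-1)!$ on the expected number of witnesses. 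This already yields $O(\log n)$ rounds when $\Delta$ is bounded, but is worthless when $\Delta$ is polynomial in $n$.

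\textbf{Step 3 (the crux: beating the degree, and the conclusion).} The missing saving of roughly $\Delta^{-1}$ per level must come from the constraints ignored so far: each $y_k$ lies in $M$, and $y_{k+1}$ is the \emph{lowest}-priority $M$-neighbour of $u_k$ while the next intermediary $u_{k+1}$ is a neighbour of $y_{k+1}$ of \emph{still smaller} priority. Informally, a high-degree vertex offers many continuations of the witness but is correspondingly forced to have its low-priority neighbourhood out of $M$, so the $\le\Delta$ combinatorial choices at one level should be cancelled, in an amortised sense, by the small conditional probability that any fixed one of them actually extends the witness. I would make this precise by carrying out the level-by-level exposure above and proving that, conditioned on the history that fixes $y_{k+1}=\kappa(u_k)$, the conditional \emph{expected} number of admissible continuations $u_{k+1}$ is a constant $c<1$. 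This is the step I expect to be the main obstacle: the vertices $y_k,u_k$ are not fixed in advance but are chosen as functions of the revealed priorities, so there is no clean product structure, and the conditioning and amortisation must be set up carefully — this is precisely where the single-permutation correlations (the reason this algorithm is harder to analyse than Luby's) bite. Granting it, the expected number of length-$t$ witnesses is at most $n\,c^{\,t-1}$, which is $n^{-\Omega(1)}$ for $t=O(\log n)$; by the first-moment method no such witness exists with high probability, and then by Step 1 the graph is already empty, proving \cref{thm:MIS}.
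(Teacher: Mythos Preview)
Your Step~1 is correct and is exactly the paper's framework: your ``witness'' is (up to reversal) the \emph{dependency path} of Definition~\ref{def:dependencypath}, and your Step~1 recursion is the content of Lemma~\ref{lemma:reductionToLength}. So the reduction to ``no long dependency path exists w.h.p.'' is fine.

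The genuine gap is Step~3, which is the whole proof. You explicitly write ``granting it'' for the claim that, conditioned on the history fixing $y_{k+1}=\kappa(u_k)$, the expected number of admissible continuations $u_{k+1}$ is a constant $c<1$; but you neither prove this nor make it precise, and as stated it is not clear it is even true. Two concrete problems. First, there is a tension between ``reveal priorities in increasing order'' and ``explore the witness top-down'': to know $\kappa(u_k)$ you must already have revealed the entire permutation below $\pi(u_k)$, so by the time $y_{k+1}$ is fixed there is essentially no fresh randomness left for the next step---the ``level-by-level exposure'' does not have the product structure you want. Second, a vertex $y\in M$ can easily have many lower-priority neighbours (each killed by a yet-earlier private neighbour), so ``expected number of lower-priority neighbours of $y_{k+1}$ is $<1$'' is false without further qualification; the saving has to come from a joint estimate over \emph{two} steps at once, and from the position of $y_{k+1}$ in the order, not from a uniform per-level bound.

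The paper's actual argument is structurally different from what you sketch. It does the first-moment/union bound over \emph{positions} $p_1<\cdots<p_L\subseteq[n]$ rather than over vertices, and associates to each $p_k$ a disjoint ``private prefix'' $P_k\subseteq[p_k)$ of carefully chosen size $l_k$ (this is the replacement for your ``fresh randomness per level''). The key technical lemma (Lemma~\ref{lemma:continuingpath}) then shows, using only the randomness in $P_{k-1}\cup P_k\cup\{p_k,p_{k+1}\}$ and conditioning on everything else, that $\Pr[p_{k-1}\sim p_k\sim p_{k+1}]\le(1-\eps)/(e\,l_k)^2$; the proof requires a case analysis on the active degree of $\pi(p_{k-1})$ (roughly: either it deviates from $n/l$ and the naive bound already improves, or it is $\approx n/l$ but the two-hop edge count is small so $\pi(p_k)$ has low active degree, or the two-hop edge count is large and one shows by a second-moment argument that this case has bounded mass). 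Multiplying over segments and union-bounding over position sets---with the $l_k$'s chosen so that the $(e\,l_k)$ factors cancel the binomial count of positions in each scale---gives the $n^{-\Omega(1)}$ bound. None of this machinery is present in your Step~3; identifying that ``two steps should cost a constant amortised'' is the right intuition, but turning it into a proof is precisely the content of Lemma~\ref{lemma:continuingpath} and the surrounding union bound, and your proposal does not supply it.
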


The result of Calkin, Frieze and Ku\v{c}era\cite{calkin1992probabilisticlower} proves that this is asymptotically best possible and we also provide an alternative short proof of the lower bound in \Cref{appendix:lower}. In \Cref{appendix:implications}, we present implications of \Cref{thm:MIS} for maximal matching and $(\Delta+1)$-vertex-coloring as well as the correlation clustering problem. 

\subsection{Overview of Our Method}
It is a well-known fact that the removal of all local minima along with their neighbors from a graph for a random order in expectation leads to a constant factor decrease in the total number of edges (see, e.g., \cite{YvesMIS} for a simple proof). When---as it is the case in Luby's algorithm---the random order is regenerated in every iteration, repeated application of this argument directly yields an upper bound of $O(\log n)$ on the round complexity. However, if the order is kept fixed between rounds, then the order of the remaining vertices is no longer uniformly distributed. 

To overcome this problem of dependencies among different iterations, Blelloch, Fineman, and Shun \cite{blelloch2012greedy}---inspired by an approach of \cite{coppersmith1989parallei} and \cite{calkin1990probabilistic}---divide the algorithm into several phases. In each phase they only expose a \emph{prefix} of the remaining order and run the parallel algorithm on these vertices only (whilst still deleting a vertex in the suffix if it is adjacent to a vertex added to the MIS).
This way, in each phase the order among the unprocessed (but possibly already deleted) vertices in the suffix remains random, leading to a sequence of ``independent" problems. 

They exploit this independence to argue that after having processed a prefix of length $\Omega(t \cdot \log n)$, the maximum remaining degree (among the not yet deleted vertices) in the suffix w.h.p.\ is $d=\frac{n}{t}$. This is because in every step\footnote{For the sake of analysis, we think of the prefix being processed sequentially. This does not change the outcome.}
in which a vertex $v$ has more than $d$ neighbors, the probability that one of these is chosen to be exposed next (which causes the deletion of $v$) is at least $\frac{d}{n}$. In the end of the phase, the probability of $v$ not being deleted is at most $\left(1-\frac{d}{n}\right)^{\Omega(t \cdot \log n)}=\frac{1}{\poly n}$. A union bound over all vertices concludes the argument. By doubling the parameter $t$ after each phase they ensure that after $O(\log n)$ phases the whole graph has been processed.
Inside each phase, they use the maximum degree to bound the round complexity by the length of a longest monotonically increasing path\footnote{A monotonically increasing path with respect to an order is a path along which the order ranks are increasing.}, which is $O(\log n)$. 

The main shortcoming of Blelloch et al.'s approach is that it relies heavily on the property that in each phase the remaining degree of \emph{all} vertices \emph{with high probability} falls below a certain value. This imposed union bound unavoidably stretches each prefix by a factor of $\log n$. 
We will circumvent this problem using the following core ideas.


\begin{enumerate}[(i),topsep=0pt]
\item
Instead of bounding the degree of \emph{all} vertices in the graph, we will consider a fixed set of $O(\log n)$ \emph{positions}, that is, indices in $\{1,\dotsc,n\}$, and only analyze the degree of vertices assigned to these positions in the random order. 
\item
Instead of using one prefix for all these vertices \emph{simultaneously}, we will essentially have \emph{one distinct ``prefix"} for each position, that is, one distinct set of vertices which we will use to argue about the degree drop of the vertex assigned to this position. This will preserve independence among positions, and hence spare us the need of a union bound.  
\item Instead of bounding the probability of a long \emph{monotonically increasing path} of vertices based on the \emph{with high probability upper bound} on the degree, we will restrict our attention to the much stronger concept of so-called \emph{dependency paths}\footnote{Note that this is not the same as \emph{dependence path} in \cite{blelloch2012greedy}.}.
Roughly speaking, a dependency path is a monotonically increasing path which alternates between vertices in the MIS and vertices not in the MIS, and the predecessor of a vertex $v$ not in the MIS is the first vertex that knocked $v$ out. These additional properties of dependency paths allow us to execute a more nuanced analysis of their occurrence probability. In particular, we will not need to argue that the degrees of our vertices drop according to some process with high probability. It suffices to show that for any degree of this vertex its chances of being part of a dependency path are low enough.
\end{enumerate}

\paragraph{Proof Outline} Summarized, our method---comprising all the aforementioned ideas---can be outlined as follows. We will show that the parallel round complexity is bounded by the largest dependency path length. It is then enough to show that w.h.p.\ there cannot be a dependency path of length $L=\Omega(\log n)$. In a first step, we will analyze the probability that a fixed set $P$ of positions forms a dependency path. To this end, we assign each position a position set (which will play the role of a ``prefix'' for this position's vertex and thus serve the purpose of controlling its degree) of a certain size and at a certain place, both carefully chosen depending on the position. Then we will argue for each position\footnote{In fact, to get a strong enough bound, we will have to argue not only about one position but about two positions simultaneously.} that its probability of being part of and continuing the dependency path is not too high, based on the randomness of its associated position set. Being careful about only exposing positions that have not already been exposed for other positions, we will be able to combine these probabilities to obtain a  
bound on the probability that $P$ forms a dependency path. Finally, we union bound over all choices for $P$.

\subsection{Notation}
We use $[n]:=\{1, \dotsc, n\}$ and $[x,y]:=\{x, \dotsc, y\}$. For two sets $X,Y \subseteq \mathbb{N}$, we write $X < Y$ if $\max X < \min Y$. We use $X[i]$ for the $i^{\text{th}}$ element in the set $X$ (we think of all sets as ordered) and $X[I]$ the (ordered) set of the elements in $X$ at positions $I\subseteq \left[|X|\right]$. 
For an order $\pi \colon [n] \rightarrow V$, we say that vertex $v\in V$ has position $i$ if $\pi(i)=v$,
use $\pi(I):=\bigcup_{i \in I} \pi(i)$, and write $\pi(I)=\pi'(I)$  when $\pi(i)=\pi'(i)$ for all $i \in I$. 
We say that we \emph{expose} a position $i$ when we fix the vertex $\pi(i)$ in a random order $\pi$. Moreover, we say a vertex $v$ is exposed if there is a position $i$ such that $i$ is exposed and $\pi(i)=v$. If a set $I$ of positions is already exposed, this means that the considered probabilities are all conditioned on $\pi(I)$. We use subscript $I$ to indicate that the probability is over the randomness in the positions $I$. For a graph $G=(V,E)$, we use $E(X,Y)$ to denote the set of edges in $E$ between $X$ and $Y$, for $X,Y \subseteq V$, and write $e(X,Y)=|E(X,Y)|$. Moreover, we let $N(v):=\{u \in V \colon \{u,v\}\in E\}$ denote the neighborhood of vertex $v\in V$.

\section{Proof of \texorpdfstring{\Cref{thm:MIS}}{Theorem~\ref{thm:MIS}}: Upper Bound}\label{section:upper}
\subsection{Framework}\label{framework}

We introduce the concept of \emph{dependency paths},  and show that this notion is closely related to the round complexity of the parallel greedy MIS algorithm. 

\paragraph{Dependency Path}
For a fixed permutation $\pi \colon [n] \rightarrow V$, let $V^*\subseteq V$ denote the MIS generated by the (sequential) greedy algorithm that processes the vertices in the order $(\pi(1), \dotsc, \pi(n))$. For every vertex $v$ not in $V^*$, we use $\text{inhib}(v)$ to denote the neighbor of $v$ in $V^*$ of minimum position, that is, setting $\text{inhib}(v):=\arg \min \{ \pi^{-1}(u) \colon u \in N(v) \cap V^*\}$, and call it $v$'s \emph{inhibitor}. 

\begin{definition}\label{def:dependencypath} 
A sequence $1\leq p_1 < \dotsb < p_{2l+1} \leq n$ of positions forms a \emph{dependency path} of length $2l+1$ for $l \geq 0$ if
\begin{enumerate}[(i)]
\item  $(\pi(p_1),  \dotsc, \pi(p_{2l+1}))$ is a path in $G$,
\item $\{\pi(p_k)\mid k \text{ is odd}\}\subseteq V^*$,
\item $\{\pi(p_k) \mid k \text{ is even}\} \subseteq V \setminus V^*$,
\item $\pi(p_{k-1})=\text{inhib}(\pi(p_k))$ for even $k$.
\end{enumerate}
We write $p_1 \sim \dotsb \sim p_{2l+1}$.
\end{definition}

\paragraph{Connection to Parallel Algorithm} In the following, we establish a connection between the round complexity of the parallel algorithm and the \emph{dependency length}, defined as the length of the longest dependency path in a graph.

\begin{lemma}\label{lemma:reductionToLength}
If the dependency length is $2l+1$, the parallel algorithm takes at most $l+1$ rounds.
\end{lemma}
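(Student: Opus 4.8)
The plan is to show that in each round of the parallel algorithm, the longest dependency path among the *surviving* vertices drops by at least $2$, which then gives the bound of $l+1$ rounds by induction on $l$. Let me first recall what happens in one round: every vertex that is a local minimum (appears before all its neighbors in $\pi$) joins $V^*$ and is removed along with its neighbors. Crucially, since we always use the same order $\pi$, the MIS $V^*$ computed by the sequential greedy algorithm on the full order restricts correctly: the MIS that the parallel algorithm eventually produces is exactly $V^*$, and for any vertex $v$ that survives the first $r$ rounds, the inhibitor structure among surviving vertices is inherited from the global $\text{inhib}(\cdot)$.

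First I would pin down precisely which vertices are removed in round $1$. A vertex $\pi(p)$ is removed in round $1$ iff either it is a local minimum (so $\pi(p)\in V^*$ and all its neighbors have larger position), or it has a neighbor that is a local minimum. Now suppose $p_1 \sim p_2 \sim \dotsb \sim p_{2l+1}$ is a dependency path of maximum length $2l+1$. I claim $\pi(p_1)$ is removed in round $1$: indeed $\pi(p_1)\in V^*$, and if $\pi(p_1)$ were not a local minimum it would have a neighbor $u$ of smaller position; but $u \notin V^*$ (since $V^*$ is independent), so $\text{inhib}(u)$ is defined and has position $\le \pi^{-1}(u) < p_1$ — this would let us prepend $\text{inhib}(u)$ and $u$ to the path, contradicting maximality, unless $\text{inhib}(u) = \pi(p_1)$, which is impossible since $\pi^{-1}(\text{inhib}(u)) < p_1$. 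Wait — I need to be careful: $\text{inhib}(u)$ could fail to be adjacent to $\pi(p_2)$'s predecessor structure, but for prepending I only need $\text{inhib}(u)\sim u \sim p_1$, and $\text{inhib}(u)$ is in $V^*$, adjacent to $u$, with position $< p_1 < p_2$, so this is a valid longer dependency path. Hence $\pi(p_1)$ is a local minimum and is removed in round $1$. Next I would argue $\pi(p_2)$ is also removed in round $1$: its inhibitor is $\pi(p_1)$, which joins $V^*$ in round $1$, so $\pi(p_2)$ is deleted as a neighbor of a round-$1$ MIS vertex.

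The key step is then: the surviving graph after round $1$ contains no dependency path of length $> 2l-1$. Given the above, positions $p_1$ and $p_2$ are gone, so $p_3 \sim \dotsb \sim p_{2l+1}$ is a dependency path of length $2l-1$ in the surviving graph — but I need the converse direction, namely that no *new* long dependency path appears and that the dependency length genuinely decreases. For this I would show: any dependency path in the graph $G_1$ surviving round $1$ is also a dependency path in $G$ (the order is unchanged, $V^* \cap V(G_1)$ is the MIS that sequential greedy produces on $G_1$, and inhibitors restrict correctly since a surviving vertex's global inhibitor also survives round $1$ — otherwise the vertex itself would have been deleted). Combined with the fact that the *first* vertex of any maximum dependency path in $G$ is removed in round $1$, every dependency path in $G_1$ has length at most $2l-1$. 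Iterating, after $r$ rounds the dependency length is at most $2(l-r)+1$, so after $l$ rounds it is at most $1$ (a single MIS vertex, or empty), and that vertex — being a local minimum, by the prepending argument applied to the length-$1$ case — is removed in round $l+1$, emptying the graph.

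The main obstacle I anticipate is the bookkeeping around inhibitors under vertex deletion: verifying that for a vertex $v$ surviving round $1$, its inhibitor in $G_1$ equals $\text{inhib}(v)$ computed in $G$. The point is that $\text{inhib}(v) \in V^*$ and $\text{inhib}(v) \sim v$ in $G$; if $\text{inhib}(v)$ were removed in round $1$ it could only be because it is a local minimum (it is in $V^*$) — but then it joins the MIS in round $1$ and $v$, being its neighbor, is removed in round $1$, contradicting that $v$ survives. So $\text{inhib}(v)$ survives, it is the minimum-position $V^*$-neighbor of $v$ in $G$, and since $V(G_1)\subseteq V(G)$ it remains the minimum-position $V^*$-neighbor in $G_1$. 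With this lemma in hand the induction is routine.
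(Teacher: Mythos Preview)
Your argument is correct, but it follows a different route from the paper's. The paper argues \emph{forward}: it introduces a slowed-down variant of the parallel algorithm in which a non-MIS vertex $v$ is deleted only in the round that $\text{inhib}(v)$ joins, and then shows by induction on the round number~$i$ that any vertex entering the MIS in round~$i$ is the terminal vertex of a dependency path of length $2(i-1)+1$ (the inductive step prepends $\text{inhib}(v)$ and $v$ to the path ending at $w$'s blocking neighbor). Your argument instead runs \emph{backward}: you show that the dependency length drops by at least~$2$ each round, via the maximality/prepending observation that the first vertex of a longest dependency path must be a local minimum, together with the bookkeeping that the MIS and inhibitor structure restrict correctly to the surviving graph~$G_1$.

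Both are clean; the paper's version is shorter because the slowed-down algorithm absorbs exactly the bookkeeping you spell out (your ``inhibitors survive'' lemma is the contrapositive of ``$v$ is deleted precisely when $\text{inhib}(v)$ joins''). Your approach in exchange gives a slightly more structural statement --- the dependency length as a potential that strictly decreases --- and avoids introducing an auxiliary algorithm. One minor slip: you write $\pi^{-1}(\text{inhib}(u))\le \pi^{-1}(u)$, but in fact the inequality is strict (since $u\notin V^*$ forces an earlier $V^*$-neighbor); this does not affect the conclusion $<p_1$ that you actually use.
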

\begin{proof}
Consider a slowed-down version of the parallel algorithm in which the deletion of a vertex $v\notin V^*$ is delayed until the round in which $\text{inhib}(v)$ enters the independent set. That is, even if a neighbor $u$ of $v$ enters the independent set, $v$ is not deleted from the graph unless $u =\text{inhib}(v)$. This algorithm takes at least as many rounds as the original parallel algorithm, as the time in which a vertex is processed can only be delayed.  Furthermore the slowed-down version produces the same independent set as in the original algorithm.

We show by induction that every vertex entering the independent set in round $i$ in this modified parallel algorithm must be the last vertex of a dependency path of length $2(i-1)+1$. 
The base case $i=1$ is immediate.
If a vertex $w$ enters $V^*$ in round $i+1$, then there is a neighbor $v$ of $w$ with $\pi(v) < \pi(w)$ that was deleted from the graph (but not added to the MIS) in round $i$, as otherwise $w$ could have been added to $V^*$ in an earlier round. This means that $\text{inhib}(v)$ was added to $V^*$ in round $i$. By the induction hypothesis, $\text{inhib}(v)$ is the last vertex of a dependency path of length $2(i-1)+1$, and it is easy to check that $v$ and $w$ extend this dependency path.
\end{proof}

\subsection{Proof Outline}\label{outline}
The number of rounds taken by the randomized parallel greedy MIS algorithm in the beginning and in the end---that is, for vertices with positions in $[1, \Theta(\log n)]$ and $[\beta n, n]$, for a constant $\beta \in (0,1)$ which is given by \Cref{lemma:continuingpath} below---can be handled easily, as we will discuss in the proof of \Cref{thm:MIS} in \Cref{section:unionboundproof}. We thus focus on the technically more interesting range of positions in $[\Theta(\log n), \beta n]$ here. By \Cref{lemma:reductionToLength}, we know that the dependency length constitutes an upper bound on the round complexity of the parallel greedy MIS algorithm. It is thus enough to show the following.

\begin{theorem}\label{lemma:dependencyLength}
W.h.p.\ there is no dependency path of length $L=\Omega(\log n)$ in the interval $[ \beta  n]$.
\end{theorem}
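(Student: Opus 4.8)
The plan is to bound the probability that a fixed increasing sequence $P = (p_1 < \dotsb < p_{2l+1})$ of positions in $[\beta n]$ forms a dependency path, and then union bound over all $\binom{\beta n}{2l+1} \le n^{2l+1}$ such sequences. For the union bound to close with $L = 2l+1 = \Theta(\log n)$, I need the per-sequence probability to be at most roughly $n^{-(1+\delta)(2l+1)}$ for some constant $\delta > 0$; equivalently, each consecutive pair of positions in the path should contribute a factor of about $1/n^{1+\delta}$ on average. So the heart of the argument is a local estimate: conditioned on everything exposed so far, the ``cost'' of extending the dependency path by one more vertex (or, more robustly, by two more vertices, as the footnote in the outline warns) is small.

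Here is how I would set up that local estimate, following ideas (i)--(iii) from the overview. Rather than controlling the degrees of all vertices, I fix the $2l+1$ positions of $P$ and, to each position $p_k$, I associate a disjoint ``position set'' $S_k \subseteq [n]$ — a block of fresh indices, whose size and location I choose as a function of $p_k$ (intuitively, a vertex sitting at position $p_k$ has few surviving neighbors by the time its turn comes, because with good probability one of its many neighbors landed in $S_k$ and killed it earlier; the required size of $S_k$ scales like $\Theta(n / p_k)$ times a small slack, so that $\sum_k |S_k|$ stays below $\beta n$ and the blocks can be taken pairwise disjoint and disjoint from $P$). The key point is that exposing $S_k$ reveals information relevant only to position $p_k$, so after conditioning on $\pi(P)$ the randomness in the different $S_k$'s is independent; this is what lets me multiply the per-position bounds without a union bound. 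For each $k$ I want to show: given $\pi(p_1),\dotsc,\pi(p_{2l+1})$ and all previously exposed blocks, the probability that $\pi(p_{k-1}) = \operatorname{inhib}(\pi(p_k))$ and $\pi(p_k)$ is adjacent to $\pi(p_{k+1})$ — i.e. that the path passes through $p_k$ consistently with clauses (i)--(iv) of \Cref{def:dependencypath} — is at most $1/n^{1+\delta}$. The inhibitor clause (iv) is what makes dependency paths so much rarer than monotone increasing paths: $\pi(p_{k-1})$ must be the \emph{minimum-position} neighbor of $\pi(p_k)$ in the MIS, which both forces adjacency and, crucially, forbids any MIS vertex at a position below $p_{k-1}$ from being a neighbor — and that "absence of earlier neighbors" event has probability exponentially small in (degree $\times$ $p_{k-1}/n$), which is exactly where the $S_k$ blocks and the $\log n$ factor hidden in the position set sizes pay off. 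I will need to run this for every possible value of the relevant degree: for small degree, clause (i)'s adjacency demand already gives a factor $\le \deg/n$; for large degree, the inhibitor/absence event gives the decay; the two regimes are balanced by the choice of $|S_k|$.

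The main obstacle, and the step I expect to be technically delicate, is handling the conditioning carefully enough that the "independence across positions" claim is actually true. Exposing $S_k$ to argue about $p_k$ may incidentally reveal vertices that are neighbors of $\pi(p_{k'})$ for other $k'$, and the events "$p_{k-1}$ inhibits $p_k$" are not literally measurable with respect to $\pi(S_k \cup P)$ alone — deciding membership in $V^*$ is a global, sequential process. The fix I would pursue is the one the outline hints at: work with a \emph{slowed-down / prefix-exposing} view of the greedy process (as in \Cref{lemma:reductionToLength} and in Blelloch--Fineman--Shun), where exposing a prefix of the order already determines which of the exposed vertices are in $V^*$ and which exposed vertices are deleted, so that the relevant events become measurable at the right stage; and to order the positions $p_{2l+1}, p_{2l}, \dotsc$ (or process pairs $(p_{k-1},p_k)$) so that when I expose $S_k$ I have not yet committed to anything that the bound for $p_k$ depends on. Getting the bookkeeping right — which blocks are exposed when, that they are disjoint, that the conditional degree-by-degree bound survives the conditioning, and that arguing about pairs of positions rather than single positions (needed because a single position's bound is only $\approx 1/n$, not $1/n^{1+\delta}$) still composes — is where the real work lies. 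Once the per-pair bound of $1/n^{1+\delta}$ (or $1/n^{2+\delta}$ per pair) is in hand, multiplying over the $\approx l$ pairs gives $n^{-(1+\delta)(2l+1)}$-ish, the union bound over $\le n^{2l+1}$ sequences leaves a factor $n^{-\delta(2l+1)} = n^{-\Omega(\log n)}$, and the theorem follows.
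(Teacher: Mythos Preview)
Your proposal has the right architecture---fix the positions, assign each a disjoint ``position set'', multiply conditional per-segment bounds, then union bound---but the quantitative target you set is unattainable, and this is not a detail that can be patched. You aim for a uniform per-sequence bound of $n^{-(1+\delta)(2l+1)}$ so that the crude union bound over $\le n^{2l+1}$ sequences closes. However, the best per-segment bound one can extract (this is exactly \Cref{lemma:continuingpath}) is $\frac{1-\eps}{(e\,|S_k|)^2}$ for a pair, where $|S_k|$ is the size of the associated position set; it is \emph{not} a fixed power of $n$. Since the $S_k$ must lie below $p_k$ and be pairwise disjoint, positions that cluster near some value $T\ll n$ force $|S_k|=O(T/L)$, so each segment contributes only $\Theta(L/T)$---which for $T$ of order $\log n$ is $\Theta(1)$. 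Concretely, for $P=\{A\log n,\,A\log n+1,\dotsc,A\log n+L\}$ the dependency-path probability is merely exponentially small in $L$, so a uniform $n^{-(1+\delta)L}$ bound over all $P$ is false. Your proposed sizing $|S_k|=\Theta(n/p_k)$ is also off: it yields a per-segment factor $\Theta(p_k/n)$, which is $\Theta(1)$ for late positions, and it need not respect disjointness inside $[p_k]$.

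The paper closes this gap with a \emph{structured} union bound matched to the position-dependent probability. It partitions $[\beta n]$ into geometric intervals $I_i=[(1+\alpha)^i,(1+\alpha)^{i+1})$ and classifies sequences by the occupancy vector $(s_i)$ with $s_i=|P\cap I_i|$. Position sets for positions in $I_i$ are carved out of $I_{i-1}$, giving sizes $l_i\approx |I_{i-1}|/(s_i+1)$; iterating \Cref{lemma:continuingpath} then bounds the per-sequence probability by essentially $\prod_i(\sqrt{1-\eps}/(e\,l_i))^{s_i}$, while the number of sequences with that occupancy is $\prod_i\binom{|I_i|}{s_i}\approx\prod_i(e\,l_i(1+\alpha)^2)^{s_i}$. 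These two products cancel almost exactly by design; what survives is $\bigl(\sqrt{1-\eps}\,(1+\alpha)^2\bigr)^{L}$ times polynomial junk, which is $n^{-\Omega(1)}$ once $\alpha$ is chosen small relative to $\eps$ and $L=\Theta(\log n)$. Thus the entire argument hinges on the tiny constant gain $(1-\eps)$ in \Cref{lemma:continuingpath}---a point your outline never invokes---and on tailoring the union bound to the positions, not on any uniform $1/n^{1+\delta}$ estimate.
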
 
One main part of this theorem's proof, which appears in \Cref{section:unionboundproof}, is to argue that the probability that a fixed set $\{p_1, \dotsc, p_L\}$ of positions form a dependency path is low. This, in turn, is proved by bounding the probability that---when already having exposed $p_{k-1}$---the positions $p_k$ and $p_{k+1}$ continue a dependency path, thus $p_{k-1}\sim p_k \sim p_{k+1}$, for all segments $(p_{k-1}, p_k, p_{k+1})$ for even $k\in [L]$. By being careful about dependencies among segments, thus in particular about which randomness to expose when, these probabilities for the segments then can be combined to a bound for the probability of $p_1\sim \dotsb \sim p_L$. We will achieve this by assigning disjoint position sets $P_i$ to positions $p_i$ and exposing, roughly speaking, only $P_{k-1}, P_k$, and $\{p_{k}, p_{k+1}\}$ when analyzing the probability of the segment $(p_{k-1}, p_k, p_{k+1})$. More formally, this reads as follows. 
\begin{lemma}\label{lemma:continuingpath}
There exist absolute constants $\beta, \eps \in (0,0.01)$ such that the following holds. 
Fix three positions $p_1, p_2, p_3 \in [\beta n]$ and two disjoint sets $P_1,P_2\subseteq [ \beta n]$  of equal size $l\coloneqq |P_1|=|P_2| \geq 10000$ which satisfy $P_1 < P_2 < p_1 < p_2 < p_3$, and let $t_2:=\max P_2$. Consider $S \subseteq [ \beta n]\setminus (P_1 \cup P_2 \cup \{p_2,p_3\})$ with $p_1 \in S$. Suppose that the positions in $S$ have already been exposed. Then the probability that $p_1, p_2, p_3$ can still form a dependency path when additionally exposing $S_2\setminus S$ for $S_2\coloneqq S \cup [t_2]\cup \{p_2,p_3\}$ is $\Pr_{S_2}[ p_1\sim p_2 \sim p_3 \mid \pi\left(S\right)] \leq \frac{1-\eps}{(e\cdot l)^2}$.
\end{lemma}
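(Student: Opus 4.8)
The plan is to replace the event $p_1\sim p_2\sim p_3$ by a short list of necessary conditions that are measurable with respect to $\pi|_{S_2}$, to expose the positions of $S_2\setminus S$ in a carefully chosen order, and to use $P_1$ and $P_2$ as ``private prefixes'' that throttle the relevant degrees of $v_1:=\pi(p_1)$ (already fixed by $\pi(S)$, since $p_1\in S$) and of $v_2:=\pi(p_2)$.

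First I would extract the necessary conditions. Write $v_3:=\pi(p_3)$. If $p_1\sim p_2\sim p_3$, then $(v_1,v_2,v_3)$ is a path in $G$, $v_1,v_3\in V^*$, and $\text{inhib}(v_2)=v_1$. Since $v_3\in V^*$ has no neighbour in $V^*$ at an earlier position while $v_1\in V^*$ lies at the earlier position $p_1<p_3$, we get $v_3\notin N(v_1)$; and since $v_1\in V^*$ and $\text{inhib}(v_2)=v_1$ while $P_1\cup P_2\subseteq[t_2]\subseteq[p_1-1]$, neither $v_1$ nor $v_2$ may have a neighbour in $V^*\cap\pi(P_1\cup P_2)$. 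This gives the four conditions (a) $v_2\in N(v_1)$, (b) $v_3\in N(v_2)\setminus N(v_1)$, (c) $N(v_1)\cap V^*\cap\pi(P_1\cup P_2)=\emptyset$, (d) $N(v_2)\cap V^*\cap\pi(P_1\cup P_2)=\emptyset$. Each of these is determined by $\pi|_{S_2}$ — crucially, whether a vertex sitting at a position $q\le t_2$ belongs to $V^*$ is a function of $\pi|_{[q]}\subseteq\pi|_{[t_2]}$ — so it suffices to bound $\Pr[(\mathrm a)\wedge(\mathrm b)\wedge(\mathrm c)\wedge(\mathrm d)\mid\pi(S)]$.

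The core of the argument lies in the exposure order. After $\pi(S)$ I would expose, in this order: the positions of $[t_2]\setminus(S\cup P_1\cup P_2)$; then $p_2$; then the $2l$ positions of $P_1\cup P_2$ in increasing order; and finally $p_3$, viewing the greedy MIS as being built online while positions are processed in increasing order. The only fact needed is elementary: when positions are processed in order, the vertex revealed at the current position joins $V^*$ precisely if it has not already been deleted, so whenever the vertex revealed at a $P_1\cup P_2$ position is undeleted and adjacent to $v_1$ (resp.\ $v_2$), condition (c) (resp.\ (d)) fails. Conditioning through the revelation of $p_2$, let $d_1$ be the number of still-relevant (unassigned, undeleted) neighbours of $v_1$, let $d_2$ be the number of still-relevant neighbours of $v_2$ that are not neighbours of $v_1$, and let $N=\Theta(n)$ be the number of unassigned vertices; then $d_1,d_2$ are determined, $\Pr[(\mathrm a)]\lesssim d_1/N$ and $\Pr[(\mathrm b)]\lesssim d_2/N$, while — because along the $2l$ positions of $P_1\cup P_2$ every revelation must miss the \emph{disjoint} dangerous sets of sizes $\gtrsim d_1$ and $\gtrsim d_2$ — $\Pr[(\mathrm c)\wedge(\mathrm d)]\lesssim(1-(d_1+d_2)/N)^{2l}$. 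Multiplying, applying AM--GM to $d_1 d_2$, and writing $s=(d_1+d_2)/N\in[0,1]$, the bound takes the shape $\tfrac14 s^2(1-s)^{2l}$, whose maximum is $\tfrac1{4(l+1)^2}\bigl(\tfrac l{l+1}\bigr)^{2l}$; since $(1+\tfrac1l)^{l+1}>e$ this is strictly below $\tfrac14\cdot\tfrac1{(el)^2}$, and because $l\ge 10000$ while $l\le\beta n$ with $\beta$ a small absolute constant (so that $N\ge(1-\beta)n$ is comfortably above $el$ and all ``$\lesssim$'' slacks are bounded-factor or lower order), the product is at most $\tfrac{1-\eps}{(el)^2}$ for a suitable absolute $\eps\in(0,0.01)$.

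I expect the main obstacle to be making this sequential conditioning rigorous, and in particular bounding the drift of $d_1$ and $d_2$ while the $2l$ positions of $P_1\cup P_2$ get processed — the ``survival'' factor and the adjacency factors must read these degrees off at different moments, and one has to argue the drift costs only a bounded factor. What makes that affordable, and is the true source of the uniform $\eps$, is condition (b): demanding $v_3\in N(v_2)\setminus N(v_1)$ rather than merely $v_3\in N(v_2)$ is exactly what saves the estimate in the otherwise-tight regime where $v_1$ and $v_2$ share almost their entire neighbourhood, and it supplies the extra factor ($\approx\tfrac14$) that absorbs the bookkeeping losses. It is also the reason the lemma handles $p_2$ and $p_3$ jointly — with $p_2$ exposed before and $p_3$ after the block $P_1\cup P_2$ — instead of one step at a time.
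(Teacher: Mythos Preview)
Your reduction to the necessary conditions (a)--(d) is correct, and the idea of treating $P_1\cup P_2$ as a private block for $v_1,v_2$ is sound. The gap is in the survival estimate. You claim
\[
\Pr[(\mathrm c)\wedge(\mathrm d)]\ \lesssim\ \bigl(1-(d_1+d_2)/N\bigr)^{2l},
\]
with $d_1,d_2$ read off \emph{before} the block $P_1\cup P_2$ is processed. This inequality goes the wrong way. During the $2l$ revelations the alive dangerous set $D_i\subseteq N(v_1)\cup N(v_2)$ can only shrink (non-dangerous vertices that enter $V^*$ may kill many dangerous ones), so each factor $1-|D_i|/N_i$ is at least $1-(d_1+d_2)/N$, and your expression is a \emph{lower} bound on the survival probability. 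Concretely, if every alive non-dangerous vertex happens to be adjacent to every alive dangerous vertex, then after the very first revelation in $P_1$ the dangerous set is empty and the true survival probability is $1-(d_1+d_2)/N\approx 1$, not $e^{-2}$. In that regime your product $\tfrac{d_1}{N}\cdot\tfrac{d_2}{N}\cdot(\text{survival})$ is about $1/(4l^2)$, which already exceeds the target $(1-\eps)/(el)^2$; the ``drift'' you flag as a bookkeeping nuisance is not a bounded factor but the whole $e^{-2}$ you need, and the extra $1/4$ from AM--GM on $d_1d_2$ does not recover it.

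The underlying reason your exposure order fails is that revealing $p_2$ \emph{before} the block ties the adjacency factor to the \emph{initial} degree $d_1^{(0)}$, which is decoupled from the later survival factors $\prod_i(1-|D_i|/N_i)$; there is no monotonicity linking them. The paper instead reveals $p_2$ \emph{after} $P_1$ (and $p_3$ after $P_2$), so that the adjacency probability uses the \emph{final} degree $d_l$, and combines this with the change-of-measure identity $\Pr[\omega]=\Pr_{\mathcal A}[\omega]\prod_i(1-d_i/N_i)$ to get a per-$\omega$ bound $\frac{d_l}{N}\exp(-\sum_i d_i/N)$; since $d_i\ge d_l$ this is at most $\frac{1}{e l(1-\beta)}$. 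Note that this basic bound, squared, is only $\frac{1}{(el(1-\beta))^2}$, which is \emph{larger} than $\frac{1}{(el)^2}$---so a further argument is genuinely required. The paper obtains the $(1-\eps)$ improvement by a case analysis on the trajectory $\omega$: either $d_l$ is noticeably away from the optimizer $n/l$ (direct gain), or the second-neighbourhood edge count $e_l$ is small (so a constant fraction of candidate $v_2$'s have low outside degree, giving a gain on the $P_2$ side), or $e_l$ is large (and then Cantelli's inequality shows the degree must drop too much, so such $\omega$ carry bounded $\mathcal A$-mass). Your intuition that the disjointness in (b) is ``the true source of the uniform $\eps$'' is therefore not what is actually driving the paper's constant.
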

The proof of this lemma is deferred to \Cref{sec:continuingpath}.
 
\subsection{Proofs of \texorpdfstring{\Cref{lemma:dependencyLength,thm:MIS}}{Theorems~\ref{lemma:dependencyLength} and \ref{thm:MIS}}}\label{section:unionboundproof}

We will bound the probability that there exists a dependecy path of length $\Omega(\log n)$ in the interval $[A\log n,  \beta n]$ for some large $A$, thereby proving \Cref{lemma:dependencyLength}.  

\begin{proof}
We first upper bound the probability that a fixed set $P$ of $L=\Theta(\log n)$ (for odd $L$) positions forms a dependency path by iteratively applying \Cref{lemma:continuingpath}. We then take a union bound over all possible choices for $P$. 

\paragraph{Probability for Fixed Positions}
Let $I=[1,\dots,n]$ and let $A$ denote a large constant which we will fix later. 
For a fixed set $P=\left\{p_1,\dots,p_L\right\}\subseteq [A\log n,n]$ of $L=\sqrt{A}\log n$ positions, we will choose position sets $P_k$ to associate with each $p_k$ which satisfy the conditions
\begin{enumerate}[(1)]
\item $P_1 < \dotsb < P_{L}$,
\item $ P_k<p_k$ for all $k\in [L]$, and additionally 
\item $P_k<  p_{k-1}$ and $\abs{P_k} = \abs{P_{k-1}}$ for all even $k\in [L]$.
\end{enumerate}
This will ensure the applicability of \Cref{lemma:continuingpath} to $(p_{k-1}, p_k, p_{k+1})$ for even $k \in [L]$.

Break $[n]$ into exponentially growing sub-intervals $I_i\coloneqq\left[(1+\alpha)^{i}, (1+\alpha)^{i+1}\right)$ for some small enough constant $0<\alpha  < 1$.
Let $s_i:=|I_i \cap (P\setminus \{p_L\})|$ denote the number of positions in $P \setminus \{p_L\}$ that intersect $I_i$. Observe that $s_i=0$ for all $i+1 \leq \log_{1+\alpha} (A \log n)$.
For convenience, we use $p_{i,j}$ to refer to $p_k$ if $p_k$ is the $j^{\text{th}}$ position among $I_i\cap P$.

If $s_i>0$ then we split $I_{i-1}\setminus P$ into $s_i+1$ position sets $I_{i-1,1}< \dotsc< I_{i-1,s_i+1}$, each of size 
\[
	l_i:=\left\lfloor\frac{\abs{I_{i-1}\setminus P}}{s_i+1}\right\rfloor
		=\left\lfloor\frac{\lfloor\alpha(1+\alpha)^{i-1}\rfloor-\abs{P\cap I_{i-1}}}{s_i+1}\right\rfloor.
\]
Note that these subsets are not necessarily contiguous and do not contain any of the positions $p_k\in P$. We claim that $l_i \geq \alpha(1+\alpha)^{i-2}/(s_i+1)\geq 10000$. Indeed for $s_i$ to be non-zero $i$ must satisfy $(1+\alpha)^{i+1} \geq A \log n = \sqrt{A}\abs{P}$. Thus the first inequality holds for $A(\alpha)$ large enough. Secondly $s_i \leq \abs{P}$ thus the second inequality holds for $A$ large enough as well.

Next we assign each position $p_k=p_{i,j}\in P$ a position set 
\[
P_k=I(p_k)=I(p_{i,j}) := \begin{cases}
	I_{i-1, j}, &\text{if $k$ is odd or $j\neq 1$,} \\
	I_{i-h_i-1,s_{(i-h_{i})}+1}, &\text{if $k$ is even, $j= 1$},
\end{cases}
\]
where $h_i:=i-\max\{j\mid j < i \colon s_j\neq 0\}$ 
is the distance $i-j$ of $I_i$ to the next smaller interval $I_j$ which contains at least one position from $P$.\footnote{Note that, since we are interested in $h_i$ only for positions $p_k=p_{i,1}$, where $k$ is even. Thus in particular not for $p_1$, there will actually always be a $j < i$ with $s_j\neq 0$.}  In words, we assign the $j^{\text{th}}$ position $p_{i,j}$ in the interval $I_i$ the $j^{\text{th}}$ position set $I_{i-1,j}$ in the previous interval $I_{i-1}$, except for positions $p_k$ for which $k$ is even and $p_k$ is the first position in some interval $I_i$. Since in that case $p_{k-1}$ is in the interval $I_{i-h_i}$, and thus is assigned a position set $I_{i-h_i-1,s_{(i-h_i)}}$, we have to assign $p_{k}$ the position set $I_{i-h_i-1,s_{(i-h_i)}+1}$ in order to not violate condition (3).

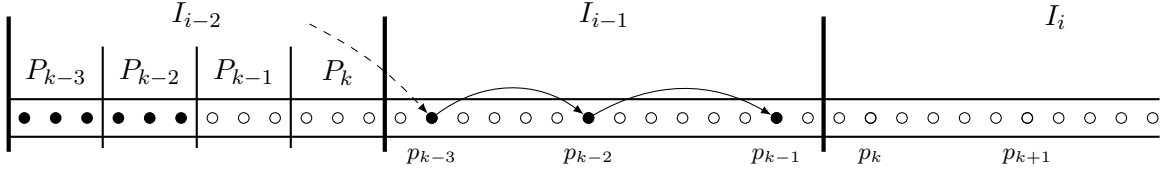
\begin{figure}
\begin{tikzpicture}[>=latex]
	\tikzstyle{point}=[draw,circle,black,minimum size=4pt,inner sep=0pt]
	\useasboundingbox (-0.1,-0.4) rectangle (15.5,1.9);

	\node (L1) at (0,0) {};
	\node (L2) at (5,0) {};
	\node (L3) at ($(10+2*5/12,0)$) {};
	\node (L4) at (17,0) {}; 

	\coordinate (L11) at ($0.75*(L1)+0.25*(L2)$);
	\coordinate (L12) at ($0.5*(L1)+0.5*(L2)$);
	\coordinate (L13) at ($0.25*(L1)+0.75*(L2)$);;

	\coordinate (L1A) at ($0.5*(L1)+0.5*(L11)$);
	\coordinate (L1B) at ($0.5*(L11)+0.5*(L12)$);
	\coordinate (L1C) at ($0.5*(L12)+0.5*(L13)$);
	\coordinate (L1D) at ($0.5*(L13)+0.5*(L2)$);

	\coordinate (L2M) at ($0.5*(L2)+0.5*(L3)$);
	\coordinate (L2ML) at ($0.5*(L2)+0.5*(L2M)$);
	\coordinate (L2MR) at ($0.5*(L2M)+0.5*(L3)$);
	\coordinate (L3M) at ($0.5*(L3)+0.5*(L4)$);
	\coordinate (L3ML) at ($0.5*(L3)+0.5*(L3M)$);
	\coordinate (L3MR) at ($0.5*(L3M)+0.5*(L4)$);

	\draw [thick] ($(L1)$) -- ($(15.3,0)$);
	\draw [thick] ($(L1)+(0,0.5)$) -- ($(15.3,0)+(0,0.5)$);

	\draw [ultra thick] ($(L1)-(0,0.2)$) -- ($(L1)+(0,1.6)$);
	\draw [thick] ($(L11)-(0,0.15)$) -- ($(L11)+(0,1.2)$);
	\draw [thick] ($(L12)-(0,0.15)$) -- ($(L12)+(0,1.2)$);
	\draw [thick] ($(L13)-(0,0.15)$) -- ($(L13)+(0,1.2)$);
	\draw [ultra thick] ($(L2)-(0,0.2)$) -- ($(L2)+(0,1.6)$);
	\draw [ultra thick] ($(L3)-(0,0.2)$) -- ($(L3)+(0,1.6)$);

	 \foreach \x in {0,...,5}
          \node (P\x) at ($({(\x+0.5)*5/12}, 0.25)$) [point,fill] {};
	 \foreach \x in {6,...,36}
            \node  (P\x) at ($({(\x+0.5)*5/12}, 0.25)$) [point] {};

      \node at (P13) [point,fill,label={[label distance=0.2cm]below:\footnotesize{$ p_{k-3}$}}] {};
      \node at (P18) [point,fill,label={[label distance=0.2cm]below:\footnotesize{$ p_{k-2}$}}] {};
      \node at (P24) [point,fill,label={[label distance=0.2cm]below:\footnotesize{$ p_{k-1}$}}] {};

      \node at (P27) [point,label={[label distance=0.2cm]below:\footnotesize{$ p_{k}$}}] {};
      \node at (P32) [point,label={[label distance=0.2cm]below:\footnotesize{$ p_{k+1}$}}] {};

	\path[dashed,->] (4,1.5) edge [bend left=10]  (P13);
	\path[->] (P13) edge [bend left=37] (P18);
	\path[->] (P18) edge [bend left] (P24);

      \node at ($(L1A)+(0,0.85)$) {$P_{k-3}$};
      \node at ($(L1B)+(0,0.85)$) {$P_{k-2}$};
      \node at ($(L1C)+(0,0.85)$) {$P_{k-1}$};
      \node at ($(L1D)+(0,0.85)$) {$P_{k}$};

      \node [above=1.3cm of L12] {$I_{i-2}$};
      \node [above=1.3cm of L2M] {$I_{i-1}$};
      \node [above=1.3cm of L3M] {$I_{i}$};
\end{tikzpicture}

\caption{The situation before applying \Cref{lemma:continuingpath} to $(p_{k-1},p_k,p_{k+1})$. The set $I_{i-2}\setminus P$ has been split into $s_{i-1}+1=4$ parts. The first three parts are associated with the points in $P\cap I_{i-1}$. Since $k$ is even and $p_k$ is the first position in the interval $I_{i}$, its associated set is $P_k=I_{i-h_i-1,s_{i-h_i}+1}=I_{i-2,4}$ (and not $I_{i-1,1}$). This ensures that $\left|{P_{k-1}}\right| = \left|{P_k}\right|$. The positions before  $P_{k-1}=I_{i-2,3}$ as well as $p_{k-3}$, $p_{k-2}$, $p_{k-1}$ have already been exposed. Invoking \Cref{lemma:continuingpath} will additionally expose $P_{k-1}$, $P_k$, $p_k$ and $p_{k+1}$.}
\end{figure}
Let $S_0=\{p_1\}$, and for every even $k\in [L]$, let $S_k=S_{k-2}\cup [\max P_k] \cup \left\{p_k, p_{k+1}\right\}$. Then, iteratively for every even $k\in [L]$, 
apply \Cref{lemma:continuingpath}  with $(p_1, p_2,p_3)\gets (p_{k-1}, p_{k}, p_{k+1})$, $(P_1,P_2) \gets \left(P_{k-1},P_{k}\right)$, $l \gets |P_{k-1}|=|P_k|$, $S \gets S_{k-2}$.
Observe that indeed $(P_{k-1}\cup P_k)\cap S_{k-2} = \emptyset$, $p_{k-1}\in S_{k-2}$, and $S_k=S_{k-2} \cup [\max P_k]\cup \{p_k, p_{k+1}\}$, which---together with properties (1)--(3)---makes the lemma applicable.  Thus for every even $k$ we obtain a bound of $\frac{1-\eps}{e^2 \cdot \abs{P_{k-1}} \cdot \abs{P_{k}}}$ on the probability that $p_{k-1},p_{k}$, and $p_{k+1}$ form a dependency path when exposing $S_{k}$, conditioned on already having exposed $S_{k-2}$. Thus, $P$ forms a dependency path with probability at most
\begin{align*}
&\prod_{\mathclap{k\in [L] \colon k \text{ even}}}\text{Pr}_{S_{k}}\left[p_{k-1} \sim p_{k} \sim p_{k+1} \mid \pi(S_{k-2})\right] 
 \leq \prod_{{k \in [L]\colon k \text{ even}}} \frac{1-\eps}{e^2\cdot |P_{k-1}|\cdot|P_{k}|}=\prod_{i\colon s_i\neq0} \prod_{j=1}^{s_i} \frac{\sqrt{1-\eps}}{e  \cdot\abs{I(p_{i,j})}}.
\end{align*}
 Let $\tilde I = \{i\colon s_i \neq 0\wedge p_{i,1}=p_k \text{ for some even $k$}\}$. Then for $i\in \tilde I$
\[
	\frac{l_i}{|I(p_{i,1})|} = \frac{l_i}{l_{i-h_i}} \leq (1+\alpha)^{h_i} \frac{s_{i-h_i}+1}{s_i+1} \cdot (1+\alpha) \leq (1+\alpha)^{h_i} (s_{i-h_i}+1).
\]
Since $\sum_{i\in\tilde I} h_i\leq \log_{1+\alpha} n$ and since the mapping $f: \tilde I\to \{i\colon s_i\geq 1\}$, $f(i)=i-h_i$ is an injection we can bound the product by
\[
	\prod_{i\in \tilde I } \frac{l_i}{|I(p_{i,1})|} \leq \prod_{i \in \tilde I }  (1+\alpha)^{h_i} (s_{i-h_i}+1) \leq n \prod_{i\colon s_i\geq 1} (s_i + 1) \leq n \left(\frac{2L}{\log_{1+\alpha}(n)}\right)^{\log_{1+\alpha}(n)}.
\]
Finally by definition $|I(p_{i,j})|=l_i$ whenever $i\notin \tilde I$ or $j>1$. Therefore we obtain
\begin{align*}
& \prod_{i\colon s_i\neq0} \prod_{j=1}^{s_i} \frac{\sqrt{1-\eps}}{e  \cdot\abs{I(p_{i,j})}} \leq n    \left(\frac{2L}{\log_{1+\alpha}(n)}\right)^{\log_{1+\alpha}(n)} \prod_{i\colon s_i\neq 0} \left(\frac{\sqrt{1-\eps}}{e\cdot l_i}\right)^{s_i}.
\end{align*}

\paragraph{Union Bound Over All Positions} For fixed values of $\{s_i\}$ there are at most 
\begin{align*}
&n \cdot \prod_i \binom{\alpha\cdot (1+\alpha)^i}{s_i}
\leq n \cdot \prod_{i \colon s_i \neq 0} \left(\frac{e \cdot \alpha \cdot (1+\alpha)^i}{s_i}\right)^{s_i}  
= n\cdot \prod_{i\colon s_i \neq 0} \left(\left(\frac{s_i+1}{s_i}\right) \cdot \left(\frac{e \cdot\alpha \cdot (1+\alpha)^i}{s_i+1}\right) \right)^{s_i}
\\ & \leq n^ \cdot \prod_{i\colon s_i \neq 0} e \cdot \left( e \cdot l_i \cdot (1+\alpha)^2 \right)^{s_i} 
 \leq n^{1 + \frac{1}{\log (1+\alpha)}}\cdot \prod_{i\colon s_i \neq 0} \left(e \cdot l_i \cdot \left(1+\alpha\right)^2\right)^{s_i}
\end{align*}
choices for positions $P$ with $\abs{I_i\cap (P\setminus\{p_L\})} = s_i$ (counting $n$ choices for $p_L$), using $\left(\frac{s_i + 1}{s_i}\right)^{s_i} \leq e$ in the  second inequality and $|\{i \colon s_i \neq 0\}| \leq \log_{1+\alpha} n$ in the third inequality.  

Therefore, by a union bound, the probability of having a dependency path of length $L$ for prescribed $\{s_i\}$ is at most
$n^{2 + \frac{1}{\log (1+\alpha)}}\cdot \left(\frac{2L}{\log_{1+ \alpha}n}\right)^{\log_{1+\alpha} n} \cdot \left(\sqrt{1-\eps}\cdot (1 +  \alpha)^2\right)^{L}$. 
Since we can assign values to $\{s_i\}$ in at most $\binom{L+\log_{1+\alpha } n}{\log_{1+\alpha}n}\leq \left(e \cdot \left( \frac{L}{\log_{1+\alpha} n}+1\right)\right)^{\log_{1+\alpha}n}$
 ways, it follows that the probability of a dependency path of length $L$ is at most 
\begin{equation*}\begin{aligned}
&n^{2+ \frac{1}{\log(1+\alpha)}} \cdot\left(\frac{2L}{\log_{1+\alpha} n} \cdot e \cdot \left(\frac{L}{\log_{1+\alpha} n}+1\right)\right)^{\log_{1+\alpha}n}
\cdot \left(\sqrt{1-\eps}\cdot(1+\alpha)^2\right)^L,\end{aligned}\end{equation*}
which is $\frac{1}{\poly n}$ for a constant $\alpha$ small enough depending on $\eps$, and $L=\sqrt{A} \log n$ large enough depending on $\alpha$. 
\end{proof}

We now use the bound on the dependency length from the previous result to prove \Cref{thm:MIS}. 
\begin{proof}
By \Cref{lemma:dependencyLength}, w.h.p.\ the length of any dependency path in the interval $[\beta n]$ is $O(\log n)$, and by \Cref{lemma:reductionToLength}, this thus bounds the number of rounds needed by the parallel algorithm to process this interval by $O(\log n)$.

\paragraph{Suffix $[ \beta n, n]$}
For the suffix $[ \beta n, n]$, it can be shown that w.h.p.\ after processing the first $ \beta n$ positions the maximum degree among all remaining vertices is at most $d=O(\log n)$. See e.g. Lemma 3.1 in \cite{blelloch2012greedy}. Since each possible path of length $L$ is monotonically increasing with probability $\frac{1}{L!}\leq (\frac{e}{L})^L$, the probability of having such a path in the suffix is at most $n \cdot  \left(\frac{e\cdot d}{L}\right)^L$, which is $\frac{1}{\poly n}$ for $L=\Omega(\log n)$ large enough. Finally, observe that for the parallel algorithm to take $L$ rounds there indeed must be a monotonically increasing path of length $L$. 

\end{proof}

\subsection{Proof of \texorpdfstring{\Cref{lemma:continuingpath}}{Lemma~\ref{lemma:continuingpath} and \ref{thm:MIS}}: Continuing a Dependency Path}\label{sec:continuingpath}
\begin{proof}
As we will see next, the probability of $p_1 \sim p_2 \sim p_3$ can be bounded by considering an execution of the \emph{sequential greedy MIS} algorithm on a random ordering. 

\paragraph{Connection to Sequential Algorithm}
We will work with the following sequential algorithm.
Initially, in step $t=1$, all vertices are called \emph{alive}. Then, in each step $t\in[n]$, position $t$ is exposed (if not already exposed) and vertex $\pi(t)$ processed as follows. If $\pi(t)$ is alive in step $t$, then $\pi(t)$ is added to the MIS, and $\pi(t)$ as well as all its neighbors are called \emph{dead} (i.e., not alive) for all steps $t' > t$ after $t$. If $\pi(t)$ is dead in step $t$, then we proceed to the next step. We say that a vertex \emph{dies} in step $t$ if it is alive in step $t$ and dead in step $t+1$, and say that it is dead \emph{after} $t$.

Let $t_1 = \max P_1$ and $t_2=\max P_2$. Consider the events
\begin{enumerate}[$\mathcal{E}_1$:]
\item $\pi(p_1), \pi(p_2)$ are neighbors and alive in step $t_1 + 1\leq p_1$, and  
\item $\pi(p_2),\pi(p_3)$ are neighbors and alive in step $t_2 + 1\leq p_1$, and $\pi(p_1),\pi(p_3)$ are not neighbors. 
\end{enumerate}

Observe that these two events are necessary for $p_1\sim p_2\sim p_3$. Indeed for both $\pi(p_1)$ and $\pi(p_3)$ to enter the independent set they must not share an edge and must be alive in step $t_2 +1 \leq p_1<p_3$. Furthermore by definition of a dependency path $\pi(p_2)$ dies in step $p_1$ and therefore must be alive in step $t_1 + 1$ as well.

In the following, we call an alive and unexposed (with respect to a step and a set of exposed positions) vertex \emph{active}, and let the \emph{active degree} of a vertex be its number of active neighbors.

\paragraph{Proof Sketch} By the above observation it is enough to bound the probability that during the execution of the sequential algorithm up to step $t_2$ the events $\mathcal{E}_1$ and $\mathcal{E}_2$ occur.  
First, we will investigate how the active degree of the vertex $v:=\pi(p_1)$ evolves and thereby affects the probability of $\mathcal{E}_1$ when the sequential algorithm runs through the position set $P_1$ up to step $t_1$.
The main observation is that on the one hand, if the active degree $d$ of $v$ in step $t_1$ is low, then it is unlikely that $\pi(p_2)$ is an active neighbor of $v$ (this happens with probability $\approx \frac{d}{n}$).\footnote{Note that since there are always at least $(1-\beta)n$ unexposed positions the probability that one of $d$ alive vertices is exposed in the next step is always $\Theta\left(\frac{d}{n}\right)$.}
On the other hand, if the active degree of $v$ is high in step $t_1$ (and thus during all steps $P_1$) then $v$ is likely to die because one of its active neighbors enters the MIS ($v$ stays alive with probability $\approx \left(1-\frac{d}{n}\right)^{l}$).
The combined probability (over $P_1\cup \left\{p_2\right\}$) that $v$ stays alive and one of its active neighbors is selected for $\pi(p_2)$ is $\approx \frac{d}{n} \cdot \left(1-\frac{d}{n}\right)^l$. Therefore the probability of $\mathcal{E}_1$ is maximized for $d \approx \frac{n}{l}$, yielding a value $\frac{1}{e \cdot l}$ for one additional position, and hence $\frac{1}{(e \cdot l)^2}$ for two positions, which falls just short of our desired bound.

In that seemingly bad case, however, we will argue that $\pi(p_2)$ is likely to have a low active degree, which in turn will make $\mathcal{E}_2$ improbable (by employing a similar argument for $v':=\pi(p_2)$ when running the algorithm through $P_2$ up to step $t_2$, also exposing $\pi(p_3)$). Taken together, $\mathcal{E}_1$ \emph{and} $\mathcal{E}_2$ will have low probability in all cases, i.e., for any active degree $d$ of $v$.

\paragraph{Formal Proof} 
We may assume without loss of generality that $[t_2]\setminus \left(P_1 \cup P_2\right)\subseteq S$. If not, we expose the missing positions and add them to $S$.
Let $t_0=\min P_1$ and $S_1=S \cup P_1 \cup \{p_2\}$. Suppose that $S$ is exposed and that the sequential algorithm has run up to step $t_0-1$. We then run the sequential algorithm through the steps $[t_0, t_1]\supseteq P_1$. For $i \in \{0, \dotsc, l-1\}$, let $N_i^1$ denote the set of active neighbors of $v$ in step $P_1[i+1]$, and $N_l^1$ the set of active neighbors of $v$ after step $t_1=P_1[l]$. We use $d_i$ for the corresponding degrees, let $N_i^2$ denote the set of active neighbors of (vertices in) $N_i^1$ without $N_i^1$, and set $E_i:=E(N_i^1, N_i^2)$ and $e_i:=|E_i|$, in the respective step.  

We now analyze the probability of $\mathcal{E}_1$ by distinguishing several cases based on the set $\Omega$ of all vertex sequences $\omega \colon P_1 \rightarrow V \setminus \pi(S)$ that can be encountered when exposing $P_1$. We restrict our attention to $\omega \in \Omega^*:= \Omega \setminus \Omega_0$ for $\Omega_0:=\{\omega \in \Omega \colon v \text{ dead in steps after } t_1 \text{ under } \omega\}$, as otherwise $\mathcal{E}_1$ is impossible.  
We introduce the following auxiliary algorithm. Let $\mathcal{A}$ be the algorithm that works exactly as the sequential greedy algorithm, except that it picks a vertex for $P_1[i]$ uniformly at random from $V \setminus \left(\pi(S) \cup N_{i-1}^1\right)$ instead of from $V \setminus \pi(S)$. For $\omega \notin \Omega_0$ this set never becomes empty. If $V \setminus \left(\pi(S) \cup N_{i-1}^1\right)=\emptyset$ at some step then define $\mathcal A$ to pick vertices in some arbitrary way.
We use $\text{Pr}_{\mathcal{A}}[\omega \mid \pi(S)]$ to denote the probability of the algorithm $\mathcal{A}$ picking the sequence $\omega\in\Omega^*$ when exposing $P_1$, conditioned on $\pi(S)$.  Note that $\text{Pr}_{\mathcal{A}}[\omega\mid \pi(S)] = \prod_{i=0}^{l-1}\left(\frac{1}{n-\abs{S}-i-d_i}\right)$ and therefore
\[
\text{Pr}_{S_1}[\pi(P_1)=\omega(P_1)\mid \pi(S)] = \prod_{i=0}^{l-1}\left(\frac{1}{n-\abs{S}-i}\right) =  \text{Pr}_{\mathcal{A}}[\omega\mid \pi(S)]\cdot \prod_{i=0}^{l-1}\left(1- \frac{d_i}{n-|S|-i}\right).
\]
Moreover, the probability of having an active neighbor of $v$ at $p_2$ under $\omega$ is at most $\frac{d_l}{n-|S|-l}$. Thus, 
\begin{equation}
\begin{aligned}
\label{eq:probability}&\text{Pr}_{S_1}[\mathcal{E}_1  \text{ and }  \pi(P_1)=\omega(P_1) \mid \pi(S)]
\leq \frac{d_l}{n-|S|-l} \cdot \text{Pr}_{\mathcal{A}}[\omega \mid \pi(S)] \cdot \exp\left(-\sum_{i=0}^{l-1}\frac{d_i}{n}\right).
\end{aligned}
\end{equation}
Since the $d_i$ are decreasing (in $i$) this term is maximized for $d_0=\dots=d_l=\frac{n}{l}$, yielding an upper bound of 
\begin{equation*}\begin{aligned}
&\text{Pr}_{\mathcal{A}}[\omega \mid \pi(S)] \cdot \frac{n}{e \cdot l \cdot (n-\abs S-l)}
\leq \text{Pr}_{\mathcal{A}}[\omega \mid \pi(S)] \cdot \frac{1}{e \cdot l \cdot (1-\beta)}.
\end{aligned}\end{equation*}
Summing up over $\omega\in \Omega^*$ yields $\text{Pr}_{S_1}[\mathcal{E}_1\mid \pi(S)]\leq \frac{1}{e\cdot l\cdot (1-\beta)}$. Note that the same upper bound of $\frac{1}{e\cdot l\cdot (1-\beta)}$ can be obtained for $\Pr_{S_2}[\mathcal{E}_2\mid \pi(S_1)]$, where $S_2=S_1 \cup P_2 \cup \{p_3\}$, by repeating the argument while exposing $P_2$.

Thus we obtain our first bound of  $\left(\frac{1}{e\cdot l\cdot (1-\beta)}\right)^2$.
The next step is to improve it to the claimed bound of $\frac{1-\eps}{\left(e\cdot l\right)^2}$. To that end, we distinguish three different cases for $\omega$, mainly based on the active degree $d_{\lfloor\delta l\rfloor}$ of $v$ after $\lfloor\delta l\rfloor$ steps under $\omega$, for $\delta = 0.01$, say. Note that our assumption of $l\geq 10000$ ensures that $\lfloor\delta l\rfloor \geq 0.99\delta l$.

\paragraph{Deviation from Degree $\approx \frac{n}{l}$: $\Omega_{1}=\{ \omega \in \Omega^* \colon d_{\lfloor\delta l\rfloor} > 1.1 \frac{n}{l} \text{ or } d_l < 0.8\frac{n}{l} \text{ under } \omega\}$}

\noindent Easy calculations show that when the degree deviates from the optimizer $\frac{n}{l}$, then the upper bound on \Cref{eq:probability} can be improved by a small constant factor $(1-\eps_1)$.

\paragraph{Degree $\approx \frac{n}{l}$ and Few Edges:  $\Omega_2=\{ \omega \in \Omega^* \colon d_{\lfloor\delta l\rfloor } \leq 1.1\frac{n}{l},  d_l \geq 0.8\frac{n}{l}, e_l \leq 0.6 \frac{n^2}{l^2}  \text{ under } \omega \}$}

\noindent We will argue that because $e_l$ is small, the active degree of $\pi(p_2)$ in steps $> t_1$ is likely to be low. In that case, $\mathcal{E}_2$ will have a low probability by an argument analogous to the one in the previous case where the active degree of $v$ deviates from $\frac{n}{l}$.

There can be at most $0.9375 d_{l}$ vertices in $N_l^1$ with degree into $N_l^2$ larger than $0.8 \frac{n}{l}$ after step $t_1$. All other at least $0.0625 d_{l}$ many vertices in $N_l^1$ thus have degree into $N_l^2$ at most $0.8\frac{n}{l}$. For $\pi(p_2)$ such that this is the case, analogously\footnote{The main observations needed for adapting the proof is that we can ignore all sequences for $P_2$ under which a vertex in $N_l^1$ or $N_l^2$ is exposed, since then either $v$ or $v'$ would die, and that $\pi(p_3) \in N_l^2\setminus N_l^1$ is necessary for $\mathcal{E}_2$.} to the case $\omega \in \Omega_1$ from before, we improve upon the trivial bound by a constant factor $(1-\eps_1)$. For $\pi(p_2)$ with larger degree we obtain the trivial bound of $\text{Pr}_{\mathcal{A}'}[\omega' \mid \pi(S_1)]\cdot  \frac{1}{e \cdot l\cdot(1-\beta)}$. Thus, on average we improve by some factor $(1-\eps_2)$.

\paragraph{Degree $\approx \frac{n}{l}$ and Many Edges:  $\Omega_3=\{  \omega \in \Omega^* \colon d_{\lfloor\delta l\rfloor } \leq 1.1\frac{n}{l}, d_l\geq 0.8\frac{n}{l},e_l > 0.6 \frac{n^2}{l^2}  \text{ under } \omega\}$}

\noindent
We will argue that $\sum_{\omega \in \Omega_3} \Pr_{\mathcal{A}}[\omega\mid \pi(S)]$ is bounded away from $1$ by a constant. Intuitively speaking, the more edges there are, the likelier it is that vertices in $N_i^1$ die. It is thus not too likely to have only a small drop in the active degree over all $(1-\delta) l$ steps if in every step the number of edges in $E_i$ is large. 

Run $\mathcal A$ for $\lfloor \delta l\rfloor$ steps and suppose that $d_{\lfloor\delta l\rfloor}\leq 1.1\frac{n}{l}$. We will bound the probability that $d_l \geq 0.8 \frac{n}{l}$ and $e_l> 0.6 \frac{n^2}{l^2}$ if we continue to run $\mathcal A$.

If $e_i \geq 0.6 \frac{n^2}{l^2}$ then let $M_i \subseteq E_i$ be a subset of exactly $\lceil 0.6 \frac{n^2}{l^2}\rceil$ edges, and let $X_{i+1}$ denote the number of vertices in $N_{i}^1$ which are connected through $M_i$ with the vertex selected in step $i+1$ scaled down by $\alpha\coloneqq \left(n-\abs{S\cup N^1_i}-i\right)/n$. To shorten notation, we use $\xi_i:=\pi\left(S\cup P_1\left(\left[i\right]\right)\right)$.
Then, for (say) $\beta\leq 0.01$,
\[ \E_{\mathcal{A}}\left[X_{i+1}\mid \xi_i\right] =\frac{\alpha |M_i|}{n-|S\cup N_i^1|-i} = \frac{\abs{M_i}}{n}  =\frac{\lceil0.6\frac{n^2}{l^2}\rceil}{n}. \]
If $e_i\leq 0.6 \frac{n^2}{l^2}$ then define $X_i$ to be a constant random variable (with the same expectation). Note that these random variables are uncorrelated. Indeed since $X_i$ is fully determined by $\xi_i$ while (for $j>i$) $\E_{\mathcal A}[X_j\mid \xi_i]$ does not depend on $\xi_i$ we have:
\[
\E_{\mathcal A}[X_i X_j] = \sum_{\xi_i} \E_{\mathcal A}[X_i X_j\mid \xi_i]\text{Pr}_\mathcal {A}[\xi_i] = \sum_{\xi_i} \E_{\mathcal A}[X_i\mid \xi_i]\E_{\mathcal A}[X_j\mid \xi_i]\text{Pr}_\mathcal {A}[\xi_i] =\E_{\mathcal A}[X_i] \E_{\mathcal A}[X_j].
\]

Now let $X:=\sum_{i=\lfloor\delta l\rfloor}^l X_i$ (note that if $e_l>0.6\frac{n^2}{l^2}$ then this is a lower bound on $d_{\lfloor \delta l\rfloor}-d_l$) with expectation $\E_{\mathcal{A}}[X \mid \xi_{\lfloor\delta l\rfloor}]\geq 0.5 \frac{n}{l}$ and variance
\[
\Var_{\mathcal{A}}[X_{i+1}\mid \xi_{\lfloor\delta l\rfloor}] \leq \E_{\mathcal{A}}[X_{i+1}^2\mid \xi_{\lfloor\delta l\rfloor}]\leq  d_{\lfloor\delta l\rfloor}  \cdot \E_{\mathcal{A}}[X_{i+1}\mid \xi_{\lfloor\delta l\rfloor}]
\leq \frac{n^2}{l^3},
\]
where we have used $X_{i+1}\leq d_{\lfloor \delta l \rfloor} \leq 1.1\frac{n}{l}$.

Since the $X_i$ are uncorrelated the variance of the sum is at most $\Var_{\mathcal{A}}[X\mid \xi_{\lfloor\delta l\rfloor}] \leq \frac{n^2}{l^2}$. 
Hence, by \emph{Cantelli}'s inequality (the one-sided version of \emph{Chebyshev}'s inequality),
\begin{align*}
\text{Pr}_{\mathcal{A}}\left[X\geq 0.4 \frac{n}{l}\mid \xi_{\lfloor\delta l\rfloor}\right] &\geq \Pr\left[X-\E_{\mathcal{A}}[X\mid \xi_{\lfloor\delta l\rfloor}] \geq -0.1 \frac{n}{l}\mid \xi_{\lfloor\delta l\rfloor}\right]  \\
 &\geq 1-\frac{\Var_{\mathcal{A}}[X\mid \xi_{\lfloor\delta l\rfloor}]}{0.001\frac{n^2}{l^2} +\Var_{\mathcal{A}}[X\mid \xi_{\lfloor\delta l\rfloor}]} \geq C > 0,
\end{align*}
for some constant $C$.

Thus, if we have $d_{\lfloor\delta l\rfloor}\leq 1.1\frac{n}{l}$ then with probability at least $C$ either $e_l \leq 0.6 \frac{n^2}{l^2}$ or the active degree drops by at least $0.4 \frac{n}{l}$. Both cases are excluded from $\Omega_3$. Therfore
$\sum_{\omega \in \Omega_3} \text{Pr}_{\mathcal{A}}[\omega \mid \pi(S)] \leq 1-C.$

\paragraph{Wrap-Up}
Combining these three cases, we obtain
\begin{align*}
&\sum_{\omega\in \Omega} \text{Pr}_{S_2}[ \mathcal{E}_1 \text{ and } \mathcal {E}_2 \text{ and } \pi(P_1)=\omega(P_1)\mid \pi(S)] 
\\ & \leq \sum_{\omega\in\Omega_1\cup\Omega_2} \text{Pr}_{\mathcal A}[\omega \mid \pi(S)] \cdot \frac{1-\min\left\{\eps_1 ,\eps_2\right\}}{\left((1-\beta)\cdot e\cdot l\right)^2} 
+  \sum_{\omega\in\Omega_3} \text{Pr}_{\mathcal A}[\omega \mid \pi(S)] \cdot \frac{1}{((1-\beta)\cdot e\cdot l)^2}.
\end{align*}
which is at most $\frac{1-\eps}{(e\cdot l)^2}$ for some absolute small constants $\eps,\beta>0$, since $\sum_{\omega\in\Omega_3}\text{Pr}_{\mathcal{A}}[\omega \mid \pi(S)] \leq 1-C$.
\end{proof}

\section{Corollaries}\label{appendix:implications}
Due to the well-known reductions of maximal matching and $(\Delta+1)$-vertex-coloring to MIS, our analysis in \Cref{thm:MIS} directly applies to the parallel/distributed randomized greedy algorithms for maximal matching and vertex coloring. For a review of the state of the art for these problems, we refer to \cite{DBLP:journals/jacm/BarenboimEPS16}.

\subsection{Randomized Greedy Maximal Matching}
The \emph{parallel/distributed randomized greedy maximal matching} algorithm works as follows: A random order of the edges is chosen. Then, in each round, all locally minimal edges are removed from the graph along with all their incident edges. 
\begin{corollary}\label{cor:MM}
The parallel/distributed randomized greedy maximal matching algorithm has round complexity $O(\log n)$ with high probability on graphs with $n$ vertices.
\end{corollary}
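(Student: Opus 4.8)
The plan is to derive \Cref{cor:MM} from \Cref{thm:MIS} via the \emph{line graph} $L(G)$ of $G=(V,E)$, that is, the graph on vertex set $E$ in which two edges of $G$ are adjacent exactly when they share an endpoint. The key point is that running the parallel/distributed randomized greedy maximal matching algorithm on $G$ is literally the same process as running the parallel/distributed randomized greedy MIS algorithm on $L(G)$: a uniformly random order of the edges of $G$ is, verbatim, a uniformly random order of the vertices of $L(G)$; an edge $e$ of $G$ is locally minimal (i.e.\ precedes every edge incident to it) precisely when the vertex $e$ of $L(G)$ is a local minimum; and deleting a locally minimal edge together with all edges incident to it is precisely deleting that vertex of $L(G)$ together with all its $L(G)$-neighbors. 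Hence the two executions agree round for round, and in particular the number of rounds taken by the matching algorithm on $G$ equals the number of rounds taken by the greedy MIS algorithm on $L(G)$.

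Given this equivalence, \Cref{cor:MM} is immediate: apply \Cref{thm:MIS} to $L(G)$. Since $L(G)$ has $|E(G)| \le \binom{n}{2} < n^{2}$ vertices, \Cref{thm:MIS} guarantees that with high probability the greedy MIS algorithm on $L(G)$ --- equivalently, the greedy maximal matching algorithm on $G$ --- terminates in $O(\log(n^{2})) = O(\log n)$ rounds.

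There is essentially no genuine obstacle here; the whole content lies in setting up the first step, and what remains is routine bookkeeping. One should verify that the correspondence above is faithful round by round --- in particular that ``locally minimal edge of $G$'' and ``local minimum of $L(G)$'' really coincide and that the two deletion rules really agree --- and one should note that the ``with high probability'' guarantee of \Cref{thm:MIS} is phrased in terms of the vertex count of the graph it is applied to, here $|E(G)|\le n^{2}$; since that count is polynomially bounded by $n$ and the failure probability in \Cref{thm:MIS} may be taken to be $1/\poly$ of the vertex count for an arbitrarily large polynomial, this is indeed with high probability in $n$ as stated.
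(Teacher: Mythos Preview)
Your proof is correct and follows exactly the same approach as the paper: reduce to \Cref{thm:MIS} via the line graph, observing that greedy maximal matching on $G$ is precisely greedy MIS on $L(G)$. You even spell out more carefully than the paper does that $|E|\le n^2$ keeps the bound at $O(\log n)$ and that the high-probability guarantee transfers.
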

\begin{proof}
For a graph $G=(V,E)$, the line graph $L=(E,F)$ is defined to be a graph with vertex set $E$ and an edge  $\{e,e'\}\in F$ iff $e \cap e' \neq \emptyset$. Running the randomized greedy MIS algorithm on the line graph $L$ corresponds to running the randomized greedy maximal matching algorithm on $G$.
\end{proof}

\subsection{Distributed Randomized Greedy \texorpdfstring{$(\Delta+1)$}{(Delta+1)}-Vertex-Coloring} The \emph{parallel/distributed randomized greedy $(\Delta+1)$-vertex-coloring} algorithm on a graph $G=(V,E)$ with maximum degree $\Delta$ works as follows: A random order of the vertex-color pairs $V \times [\Delta + 1]$ is chosen. Then, in each round, all locally minimal pairs $(v,c)$ are removed along with all $(v',c')$ such that either $v'=v$ or $\{v,v'\}\in E$ and $c'=c$. Vertex $v$ is assigned color $c$.

\begin{corollary}\label{cor:Col}
The parallel/distributed randomized greedy $(\Delta+1)$-vertex-coloring algorithm, as defined above,\footnote{This is \emph{not} the greedy coloring algorithm where the largest available color is picked greedily.} has round complexity $O(\log n)$ with high probability on graphs with $n$ vertices and maximum degree $\Delta$. 
\end{corollary}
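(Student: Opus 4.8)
The plan is to mimic the proof of \Cref{cor:MM}: exhibit an auxiliary graph on which the randomized greedy MIS algorithm, under the natural coupling of its random order, is literally the randomized greedy $(\Delta+1)$-vertex-coloring algorithm on $G$, and then invoke \Cref{thm:MIS}. Concretely, I would define $H$ to be the graph with vertex set $V\times[\Delta+1]$ in which $(v,c)$ and $(v',c')$ are adjacent exactly when ($v=v'$ and $c\neq c'$) or ($\{v,v'\}\in E$ and $c=c'$). An independent set of $H$ is precisely a partial proper $(\Delta+1)$-coloring of $G$ (at most one color per vertex, no monochromatic edge), and---since any still-uncolored vertex has at most $\Delta$ neighbors and hence at most $\Delta$ forbidden colors, leaving an available color---a \emph{maximal} independent set of $H$ is precisely a total proper $(\Delta+1)$-coloring of $G$.

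The key step is to observe that the two processes coincide round by round. When the randomized greedy MIS algorithm on $H$ adds a local minimum $(v,c)$ to the independent set, it deletes $(v,c)$ together with all its $H$-neighbors, that is, all pairs $(v,c')$ and all pairs $(v',c)$ with $\{v,v'\}\in E$; this is exactly the deletion rule of the coloring algorithm, and $(v,c)$ joining the MIS corresponds to assigning color $c$ to $v$. Since in any round the set of chosen pairs is independent in $H$, no vertex is colored twice and no monochromatic edge is created, so the coloring algorithm is well defined and, for a fixed order on $V\times[\Delta+1]$, outputs the lexicographically-first proper coloring. Hence the number of rounds of the coloring algorithm on $G$ equals the number of rounds of randomized greedy MIS on $H$.

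Finally I would apply \Cref{thm:MIS} to $H$. It has $N:=n(\Delta+1)\le n^2$ vertices, so the MIS algorithm---and therefore the coloring algorithm---terminates in $O(\log N)=O(\log n)$ rounds, and because $N\ge n$ the high-probability guarantee of \Cref{thm:MIS} stated in terms of the node count of $H$ immediately yields the claimed high-probability guarantee in terms of $n$. (If $\Delta=0$ the graph is edgeless and the statement is trivial.)

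I do not expect a genuine obstacle here. The only points that need a line of justification are that maximality of the independent set in $H$ forces every vertex of $G$ to receive a color---the standard greedy/pigeonhole argument, using that a vertex has at most $\Delta$ neighbors but $\Delta+1$ colors are available---and that the probability bound transfers from the node count of $H$ to $n$, which is immediate from $n\le N$.
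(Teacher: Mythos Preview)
Your proposal is correct and is essentially the same as the paper's proof: the auxiliary graph $H$ you define is exactly Luby's reduction graph (the paper phrases it as ``$\Delta+1$ copies of $G$ plus a clique on the copies of each vertex''), and the argument that an MIS in $H$ corresponds to a proper $(\Delta+1)$-coloring of $G$ is the same. You are simply more explicit than the paper about the round-by-round coupling and about the passage from $|V(H)|=n(\Delta+1)$ back to $n$, both of which the paper leaves implicit.
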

\begin{proof}
Luby \cite{luby1985simple, linial1987LOCAL}  presented the following reduction from $(\Delta+1)$-vertex-coloring in a graph $G$ to MIS in a graph $H$: To construct $H$, take $\Delta + 1$ copies of $G$ and add a clique among all $\Delta+1$ copies of the same vertex, for all vertices in $G$. It is easy to observe that a MIS in $H$ corresponds to a proper $(\Delta+1)$-vertex-coloring of $G$, when we assign vertex $v$ the color $i$ iff the $i^{\text{th}}$ copy of $v$ is in the MIS. Indeed, due to maximality, every vertex in $G$ is assigned at least one color, and because of the added cliques and the independence of the MIS, at most one. Moreover, having a copy of $G$ for every color guarantees that all the edges must be proper (due to independence).
\end{proof}

\subsection{Correlation Clustering}
Correlation clustering has the goal to partition nodes into clusters so that the number of miss-classified edges---that is, edges with its two endpoints in two different clusters or non-edges with endpoints in the same cluster---is minimized.
More formally, we are given a complete graph on $n$ nodes where each edge is either labeled $+$ or $-$, indicating that the corresponding nodes should be in the same or in different clusters, respectively. The goal is to group the nodes into (an arbitrary number of) clusters so that the number of $-$ edges within clusters and $+$ edges crossing clusters is minimized \cite{bansal2004correlation}.
Ailon, Charikar, and Newman \cite{ailon2008aggregating} showed that the greedy MIS algorithm, called \emph{CC-Pivot} in their paper, provides a 3-approximation for correlation clustering, when each non-MIS node is clustered with its inhibitor, that is, its lowest-rank neighbor in the MIS. Moreover, \cite{chierichetti2014correlation} argues how an iteration of this (or a similar) algorithm can be implemented in $O(1)$ rounds of \emph{MapReduce} or in $O(1)$ passes of the streaming model.

\begin{corollary}
A 3-approximation for correlation clustering can be computed in $O(\log n)$ rounds in the \emph{PRAM}, \emph{LOCAL}, and \emph{MapReduce} model, and in $O(\log n)$ passes in the streaming model.  
\end{corollary}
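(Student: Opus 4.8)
The plan is to combine \Cref{thm:MIS} with the known fact that greedy MIS yields a good correlation clustering. Given the complete graph with $\pm$-labeled edges (each vertex knowing the labels of its incident edges), let $G^+$ be the $n$-node graph whose edges are exactly the $+$-edges, pick a uniformly random order of its vertices, and run the parallel/distributed randomized greedy MIS algorithm on $G^+$. Let $V^*$ be the resulting independent set; output the clustering in which every $v\in V^*$ seeds its own cluster and every $v\notin V^*$ joins the cluster of $\text{inhib}(v)$, its lowest-ranked $G^+$-neighbor in $V^*$. Since parallel greedy MIS produces exactly the lexicographically first (equivalently, sequential greedy) MIS for the chosen order, this output coincides with the \emph{CC-Pivot} clustering of Ailon, Charikar, and Newman~\cite{ailon2008aggregating} run with that order, so their analysis applies verbatim and the expected number of miss-classified edges is within a factor $3$ of the optimum.

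It remains to bound the round/pass complexity. By \Cref{thm:MIS}, parallel greedy MIS on $G^+$ terminates after $O(\log n)$ rounds with high probability. Each such round is local: a vertex needs only its own rank and the ranks of its $G^+$-neighbors to decide whether it is a local minimum, and then announces its MIS decision to its neighbors. Hence one round costs $O(1)$ rounds in the \emph{PRAM} and \emph{LOCAL} models and, as observed in~\cite{chierichetti2014correlation}, $O(1)$ rounds of \emph{MapReduce} or $O(1)$ passes of the streaming model (reading the edge labels as the stream). After $V^*$ is computed, one additional $O(1)$-round (resp. $O(1)$-pass) step lets every $v\notin V^*$ set $\text{inhib}(v)$ to the minimum-ranked among its $G^+$-neighbors that entered $V^*$ and place itself in the corresponding cluster. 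Summing, the whole procedure runs in $O(\log n)$ rounds in the \emph{PRAM}, \emph{LOCAL}, and \emph{MapReduce} models, and $O(\log n)$ passes in the streaming model.

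The conceptual content is entirely in \Cref{thm:MIS}; the points requiring care are only bookkeeping. First, one must check that the clustering the parallel algorithm produces is the one whose expected cost \cite{ailon2008aggregating} bounds: this is immediate because $\text{inhib}(v)$ is a function of $V^*$ and the order alone, both unchanged in passing from the sequential to the parallel execution, and $\text{inhib}(v)$ is recoverable during the run as the lowest-ranked neighbor of $v$ that joins $V^*$. Second, $G^+$ may be dense, but \Cref{thm:MIS} is stated for arbitrary $n$-node graphs, so density is not an obstacle, and the per-round simulation in the \emph{MapReduce} and streaming models is exactly the one spelled out in \cite{chierichetti2014correlation}. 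I expect the main (and only mild) obstacle to be verifying these model-simulation details cleanly, since \Cref{thm:MIS} does the heavy lifting.
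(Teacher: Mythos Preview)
Your proposal is correct and follows exactly the same approach as the paper: combine the 3-approximation guarantee of \emph{CC-Pivot} from \cite{ailon2008aggregating}, the $O(1)$-round/pass per-iteration simulation from \cite{chierichetti2014correlation}, and \Cref{thm:MIS} for the $O(\log n)$ bound on the number of iterations. The paper in fact leaves this corollary unproved, relying on the discussion immediately preceding it; your write-up simply spells out those same ingredients in more detail.
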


\section*{Acknowledgment}
The authors want to thank Mohsen Ghaffari for suggesting this problem, providing the construction of the lower bound, and valuable comments.

\newpage
\bibliographystyle{alpha}
\bibliography{ref}

\appendix
\newpage 

\section{Lower Bound}\label{appendix:lower}
\begin{lemma}\label{thm:MISlower}
There exists an $n$-node graph on which the parallel randomized greedy MIS algorithm with high probability takes $\Omega(\log n)$ rounds. 
\end{lemma}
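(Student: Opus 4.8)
The plan is to exhibit a graph that has a long "dependency chain" which is forced to be resolved essentially one link at a time, so that the parallel algorithm needs $\Omega(\log n)$ rounds with high probability. The natural candidate is a layered construction: take $L = \Theta(\log n)$ layers $V_1, \dots, V_L$, where each layer has a carefully decreasing size (say $|V_i| = n_i$ with $n_i / n_{i+1}$ a large constant), put a complete bipartite graph between consecutive layers $V_i$ and $V_{i+1}$, and make each individual layer an independent set (or a sparse graph). The total number of vertices is $\sum_i n_i = \Theta(n)$ if the sizes decay geometrically. The key point is that a vertex in layer $V_{i+1}$ can only enter the MIS (or be removed) after the "fate" of layer $V_i$ has been decided, because every vertex of $V_{i+1}$ is adjacent to every vertex of $V_i$; so as long as some vertex of $V_i$ is still alive and appears before it in the order, a vertex of $V_{i+1}$ is blocked. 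One then argues that, with high probability, in each layer the minimum-rank surviving vertex is "fresh enough" that its layer cannot collapse in fewer than one additional round beyond the previous layer, giving a round count of $\Omega(L) = \Omega(\log n)$.

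Concretely, first I would fix the random order $\pi$ and, for each layer $V_i$, look at the smallest position occupied by a vertex of $V_i$; call the layer's vertices in increasing order of position $v_i^{(1)}, v_i^{(2)}, \dots$. The algorithm processes layer $V_i$ essentially like a single "race": $v_i^{(1)}$ is a local minimum among $V_i$'s vertices, but whether it is a global local minimum depends on whether some neighbor in $V_{i-1} \cup V_{i+1}$ has smaller position. The heart of the argument is a coupling/union-bound showing that with high probability, for a constant fraction of the layers, the position of $v_i^{(1)}$ relative to the positions of the already-resolved vertices of $V_{i-1}$ is such that $v_i^{(1)}$ is not removed until the round after $V_{i-1}$ is cleared — because $v_i^{(1)}$ is adjacent to a vertex of $V_{i-1}$ of smaller rank, which is itself adjacent to a yet-smaller vertex of $V_{i-2}$, etc. This forces an increasing (monotone) chain of resolution times across the layers. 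To make this rigorous I would track, for each layer, the event $A_i$ that the earliest surviving vertex of $V_i$ has rank larger than the earliest surviving vertex of $V_{i-1}$ that actually enters the MIS; on $A_i$ the round in which $V_i$ becomes empty is strictly larger than the round for $V_{i-1}$. I would show each $A_i$ holds with constant probability independently enough (conditioning only on the relative order within and between adjacent layers) that a Chernoff/Azuma bound gives that $\Omega(L)$ of the $A_i$ occur with high probability, hence $\Omega(\log n)$ rounds.

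The main obstacle I expect is controlling the dependencies between the events $A_i$ across layers: the random order is a single global object, and whether layer $V_i$'s behavior is "slow" is correlated with layer $V_{i+1}$'s, so a naive product-of-probabilities argument is not valid. The way around this is to expose the randomness layer by layer — reveal only the relative order of $V_i \cup V_{i+1}$ at step $i$ — and to phrase $A_i$ so that it depends only on a bounded window of already-revealed information, making the chain a submartingale to which a one-sided concentration inequality (Azuma, or Cantelli as in the upper-bound proof) applies. A secondary technical point is getting the layer sizes exactly right: they must shrink fast enough that $\sum_i n_i = O(n)$ and $L = \Omega(\log n)$, yet each $n_i$ must remain large enough that the "earliest surviving vertex" of $V_i$ is, with probability bounded below by a constant, positioned unfavorably — balancing these two requirements is what pins down the geometric ratio and the constant in $L = \Omega(\log n)$. (A cleaner alternative, which I would mention as a remark, is to appeal directly to the matching lower bound of Calkin, Frieze and Ku\v{c}era~\cite{calkin1992probabilisticlower}, but the layered construction above gives the self-contained short proof promised in the introduction.)
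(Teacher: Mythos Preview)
Your layered construction is the right picture, and the paper uses essentially the same gadget (layers of geometrically growing size with complete bipartite graphs between consecutive layers; the paper makes each layer a clique rather than an independent set, which slightly simplifies the dynamics). Where your proposal diverges---and where the gap is---is in how to get the \emph{high-probability} conclusion.

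The paper does not attempt any concentration argument on a single copy. Instead it takes $\sqrt{n}$ vertex-disjoint copies of the gadget (each of size $\Theta(\sqrt{n})$, so $l=\Theta(\log n)$ layers of sizes $2^0,2^1,\dots,2^l$) and computes directly that in a \emph{fixed} copy the event ``the minimum of $\bigcup_{i\le j}U_i$ lies in $U_j$ for every $j$'' has probability at least $(1/4)^l=n^{-0.05}$; under this event the algorithm provably peels off only two layers per round. Since the copies are disjoint, these events are independent, and the probability that none of the $\sqrt{n}$ copies is ``bad'' is at most $(1-n^{-0.05})^{\sqrt{n}}\ll 1/\poly n$. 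That is the whole proof: a two-line product calculation plus independence across components.

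Your plan instead tries to extract w.h.p.\ from one copy via a Chernoff/Azuma argument on events $A_i$ across layers. This runs into two real problems. First, the events ``$\min V_{i-1}<\min V_i$'' are \emph{negatively} correlated (conditioning on the first $i-1$ minima being sorted pushes $\min V_i$ up in distribution), so the naive martingale you sketch does not obviously work, and your layer-by-layer revelation scheme does not decouple them because consecutive windows share a layer. Second, even granting that a constant fraction of the $A_i$ hold, you still need ``round of $V_i$'' to be monotone in $i$ to sum the increments---but when some $A_i$ fail the peeling order can reverse locally, and your sketch gives no argument for why $\Omega(L)$ successful $A_i$ still forces $\Omega(L)$ rounds. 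The independent-copies trick sidesteps both issues entirely; it is the missing idea.
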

\begin{proof}
Consider a graph consisting of $\sqrt{n}$ connected components, each connected component made of $l+1$ layers for $l:=\frac{\log n}{5}$ as follows. The $i^{\text{th}}$ layer for $i \in [0,l]$ is a clique on $2^i$ vertices, and there is a full bipartite graph between any two consecutive layers. The remaining vertices not part of a layer are just isolate.

We call a path  of length $l$ strictly increasing if the $i^{\text{th}}$ vertex on the path for $0 \leq i\leq l$ is in layer $l-i$ and the of the path vertices are sorted (in the random order).
We will argue that the probability that a connected component contains such a strictly increasing path is at least $n^{-0.05}$. 

Consider the layers $U_0, \dotsc, U_{l}$ of one connected component $U$ and a random order. The probability that $U_l$ contains the minimum among $\bigcup_{i=0}^l U_i$ is at least $\frac{|U_l|}{|U|}\geq \frac{1}{4}$. Then, conditioned on the previous event, the probability of $U_{l-1}$ containing the minimum among $\bigcup_{i=0}^{l-1} U_i$ is at least $\frac{|U_{l-1}|}{|\bigcup_{i=0}^{l-1} U_i|}\geq \frac{1}{4}$. Continuing this argument, and combining the conditional probabilities, we get a lower bound of $\left(\frac{1}{4}\right)^l =n^{-0.05}$ on the probability that there is a strictly increasing path in $U$. 

Since all the $\sqrt{n}$ connected components are independent, the probability of no such component containing a strictly increasing path is at most $\left(1-n^{-0.05}\right)^{\sqrt{n}}\ll 1/\poly n$. 
Thus, with high probability, the considered graph contains such a path. 

Finally, observe that the parallel/distributed randomized greedy MIS algorithm will take at least $(l+1)/2$ rounds until it has processed such a strictly increasing path, since the algorithm processes only 2 layers in each round.
\end{proof}
\end{document}